 \definecolor{BLACK}{gray}{0}
 \definecolor{WHITE}{gray}{1}
 \definecolor{RED}{rgb}{1,0,0}
 \definecolor{GREEN}{rgb}{0,1,0}
 \definecolor{BLUE}{rgb}{0,0,1}
 \definecolor{CYAN}{cmyk}{1,0,0,0}
 \definecolor{MAGENTA}{cmyk}{0,1,0,0}
 \definecolor{YELLOW}{cmyk}{0,0,1,0}
\theoremstyle{plain}
\newtheorem{thm}{\protect\theoremname}
 \theoremstyle{definition}
 \newtheorem{defn}[thm]{\protect\definitionname}
 \theoremstyle{plain}
 \newtheorem{cor}[thm]{\protect\corollaryname}
 \providecommand{\corollaryname}{Corollary}
 \providecommand{\definitionname}{Definition}
\providecommand{\theoremname}{Theorem}
\begin{document}

\title{Probing finite coarse-grained virtual Feynman histories with sequential weak values}

\author{Danko Georgiev}

\affiliation{Institute for Advanced Study, Varna, Bulgaria}

\email{danko.georgiev@mail.bg}

\author{Eliahu Cohen}

\affiliation{Physics Department, Centre for Research in Photonics, University of Ottawa,
Advanced Research Complex, 25 Templeton, Ottawa ON Canada, K1N 6N5}

\email{eli17c@gmail.com}

\pacs{03.65.Ta, 03.65.Ca, 03.65.Ud}

\date{May 3, 2018}
\begin{abstract}
Feynman's sum-over-histories formulation of quantum mechanics has been considered a useful calculational tool in which virtual Feynman histories entering into a coherent quantum superposition cannot be individually measured. Here we show that sequential weak values, inferred by consecutive weak measurements of projectors, allow direct experimental probing of individual virtual Feynman histories thereby revealing the exact nature of quantum interference of coherently superposed histories. Because the total sum of sequential weak values of multi-time projection operators for a complete set of orthogonal quantum histories is unity, complete sets of weak values could be interpreted in agreement with the standard quantum mechanical picture. We also elucidate the relationship between sequential weak values of quantum histories with different coarse-graining in time and establish the incompatibility of weak values for non-orthogonal quantum histories in history Hilbert space. Bridging theory and experiment, the presented results may enhance our understanding of both weak values and quantum histories.
\end{abstract}

\maketitle
In this work, we revisit the important yet controversial concept of
quantum weak values and elucidate the relationship between Aharonov's
two-state vector formalism and Feynman's sum-over-histories.
%%%
This interesting relationship resonates with past works which studied non-demolition and continuous quantum measurements \cite{Braginsky1992,Braginsky1996,Mensky1993,Mensky2000}, while connecting them with path integration \cite{Mensky1979}. Recently, the above relationship was further analyzed and strengthened by different researchers \cite{Duprey2017,Matzkin2012,Matzkin2015,Sokolovski2016a,Sokolovski2016b,Sokolovski2017a}, but here we focus on the notion of sequential weak values as a pivotal issue, which has not been mentioned before in the above literature.
%%%
In particular, we show that sequential weak values are able to probe directly the
quantum probability amplitudes along individual virtual Feynman histories
thereby possibly supporting their physical meaningfulness. Conversely, we utilize
the mathematical constraints behind Feynman summation in order to provide rules for consistent interpretation of experimentally measured
weak values.
\section{Preliminaries}
To begin with, we succinctly describe a finite coarse-grained Feynman's
sum-over-histories procedure applicable to any experiment performed with a finite precision.
\begin{defn}
(Quantum history) Quantum histories from an initial time $t_{i}$
to a final time $t_{f}$ are constructed at $k+2$ different times
$t_{i},t_{1},t_{2},\ldots,t_{k},t_{f}$ with the use of complete sets
of projection operators $\{\mathcal{\hat{P}}_{i,1},\hat{\mathcal{P}}_{i,2},\ldots,\mathcal{\hat{P}}_{i,n}\}$,
$\{\mathcal{\hat{P}}_{1,1},\hat{\mathcal{P}}_{1,2},\ldots,\mathcal{\hat{P}}_{1,n}\}$,
$\{\mathcal{\hat{P}}_{2,1},\hat{\mathcal{P}}_{2,2},\ldots,\mathcal{\hat{P}}_{2,n}\}$,
$\ldots$, $\{\mathcal{\hat{P}}_{k,1},\hat{\mathcal{P}}_{k,2},\ldots,\mathcal{\hat{P}}_{k,n}\}$,
$\{\mathcal{\hat{P}}_{f,1},\hat{\mathcal{P}}_{f,2},\ldots,\mathcal{\hat{P}}_{f,n}\}$
which at each single time span the $n$-dimensional Hilbert space
of the system $\sum_{n}\mathcal{\hat{P}}_{i,n}=\hat{I}$, $\sum_{n}\mathcal{\hat{P}}_{1,n}=\hat{I}$,
$\sum_{n}\mathcal{\hat{P}}_{2,n}=\hat{I}$, $\ldots$, $\sum_{n}\mathcal{\hat{P}}_{k,n}=\hat{I}$,
$\sum_{n}\mathcal{\hat{P}}_{f,n}=\hat{I}$. Using the symbol $\odot$
for tensor products at different times,
we can write each \emph{quantum history} as a projection
operator $\hat{\mathcal{Q}}_{j}=\mathcal{\hat{P}}_{f}\odot\mathcal{\hat{P}}_{k}\odot\ldots\odot\mathcal{\hat{P}}_{2}\odot\mathcal{\hat{P}}_{1}\odot\mathcal{\hat{P}}_{i}$
in \emph{history Hilbert space} $\breve{\mathcal{H}}=\mathcal{H}_{f}\odot\mathcal{H}_{k}\odot\ldots\odot\mathcal{H}_{2}\odot\mathcal{H}_{1}\odot\mathcal{H}_{i}$,
where $\mathcal{H}_{k}$ is a copy of the standard Hilbert space of
the physical system at time $t_{k}$ \cite{Gell-Mann1990,Gell-Mann1993,Griffiths1984,Griffiths1993,Griffiths2003,Halliwell1995,Hartle1993}. By construction there are $n^{k+2}$~orthogonal quantum histories ($\textrm{Tr}[\mathcal{\hat{Q}}_{j}\mathcal{\hat{Q}}_{j'}]=0$ for $j\neq j'$) that span the history Hilbert space $\breve{\mathcal{H}}$.
\end{defn}

\begin{defn}
(Chain operator) To each quantum history $\hat{\mathcal{Q}}_{j}=\mathcal{\hat{P}}_{f}\odot\mathcal{\hat{P}}_{k}\odot\ldots\odot\mathcal{\hat{P}}_{2}\odot\mathcal{\hat{P}}_{1}\odot\mathcal{\hat{P}}_{i}$
in history Hilbert space $\breve{\mathcal{H}}$, there is a corresponding
chain operator $\hat{K}_{j}=\mathcal{\hat{P}}_{f}\hat{\mathcal{T}}_{f,k}\mathcal{\hat{P}}_{k}\hat{\mathcal{T}}_{k,k-1}\ldots\hat{\mathcal{T}}_{3,2}\mathcal{\hat{P}}_{2}\hat{\mathcal{T}}_{2,1}\mathcal{\hat{P}}_{1}\hat{\mathcal{T}}_{1,i}\mathcal{\hat{P}}_{i}$
in standard Hilbert space $\mathcal{H}$, where $\hat{\mathcal{T}}_{k,k-1}=\hat{\mathcal{T}}_{k-1,k}^{\dagger}$
is the time evolution operator from $t_{k-1}$ to $t_{k}$.
\end{defn}

\begin{defn}(History probability amplitude)
The quantum
probability amplitude $\psi_{j}$ propagating along a quantum history
$\hat{\mathcal{Q}}_{j}$ from an initial quantum state $|\psi_{i}\rangle$
at $t_{i}$ to a final quantum state $|\psi_{f}\rangle$ at $t_{f}$
is given by $\psi_{j}=\langle\psi_{f}|\hat{K}_{j}|\psi_{i}\rangle$.
%%%
Expanding the projectors using their corresponding unit eigenvectors
as $\hat{\mathcal{P}}_{k}=|\psi_{k}\rangle\langle\psi_{k}|$, allows
us to rewrite the chain operator $\hat{K}_{j}$ as $\hat{K}_{j}=|\psi_{f}\rangle\langle\psi_{f}|\hat{\mathcal{T}}_{f,k}|\psi_{k}\rangle\langle\psi_{k}|\hat{\mathcal{T}}_{k,k-1}|\psi_{k-1}\rangle\langle\psi_{k-1}|\times\ldots\hat{\mathcal{T}}_{3,2}|\psi_{2}\rangle\langle\psi_{2}|\hat{\mathcal{T}}_{2,1}|\psi_{1}\rangle\langle\psi_{1}|\hat{\mathcal{T}}_{1,i}|\psi_{i}\rangle\langle\psi_{i}|$.
Introducing the Feynman propagators from $|\psi_{k-1}\rangle$ to $|\psi_{k}\rangle$ as $F(\psi_{k}|\psi_{k-1})=\langle\psi_{k}|\hat{\mathcal{T}}_{k,k-1}|\psi_{k-1}\rangle$,
further gives $\hat{K}_{j}=|\psi_{f}\rangle F(\psi_{f}|\psi_{k})F(\psi_{k}|\psi_{k-1}) \ldots F(\psi_{2}|\psi_{1})F(\psi_{1}|\psi_{i})\langle\psi_{i}|$.
The quantum probability amplitude for the history is then a product
of Feynman propagators (each of which is a complex-valued function)
$\psi_{j}=\langle\psi_{f}|\hat{K}_{j}|\psi_{i}\rangle=F(\psi_{f}|\psi_{k})F(\psi_{k}|\psi_{k-1})\ldots F(\psi_{2}|\psi_{1})F(\psi_{1}|\psi_{i})$.
%%%
\end{defn}

\begin{defn}
\label{def:Feyn}(Feynman's sum-over-histories) The quantum probability
amplitude for a quantum transition from an initial quantum state $|\psi_{i}\rangle$
at $t_{i}$ to a final quantum state $|\psi_{f}\rangle$ at $t_{f}$
is given by the sum $\sum_{j}\psi_{j}$ over a complete set of orthogonal
quantum histories $\{\hat{\mathcal{Q}}_{1},\hat{\mathcal{Q}}_{2},\ldots,\hat{\mathcal{Q}}_{j}\}$,
$j\in\{1,2,\ldots,n^{k+2}\}$, which span the history Hilbert space
of the system $\breve{\mathcal{H}}=\mathcal{H}_{f}\odot\mathcal{H}_{k}\odot\ldots\odot\mathcal{H}_{2}\odot\mathcal{H}_{1}\odot\mathcal{H}_{i}$.
Inclusion of $\mathcal{\hat{P}}(\psi_{i})=|\psi_{i}\rangle\langle\psi_{i}|$ among the projectors of
the complete set at $t_{i}$ and $\mathcal{\hat{P}}(\psi_{f})=|\psi_{f}\rangle\langle\psi_{f}|$ among
the projectors of the complete set at $t_{f}$ eliminates a large
number of quantum histories $\hat{\mathcal{Q}}_{\perp}$ that start
or end with projection operators respectively orthogonal to $|\psi_{i}\rangle$
or $|\psi_{f}\rangle$, and consequently have zero contribution, $\langle\psi_{f}|\hat{K}_{\perp}|\psi_{i}\rangle=0$,
to the Feynman sum. Thus, Feynman summation will produce identical
result if it is performed over all orthogonal quantum histories of
the type $\mathcal{\hat{Q}}_{s}=\mathcal{\hat{P}}(\psi_{f})\odot\mathcal{\hat{P}}_{k}\odot\ldots\odot\mathcal{\hat{P}}_{2}\odot\mathcal{\hat{P}}_{1}\odot\mathcal{\hat{P}}(\psi_{i})$,
which form a complete set for the intermediate times $t_{1},t_{2},\ldots,t_{k}$.
The usage of the Feynman sum $\sum_{s}\mathcal{\hat{Q}}_{s}$, $s\in\{1,2,\ldots,n^{k}\}$
reduces the complete history Hilbert space for Feynman summation to
$n^{k}$-dimensional due to consideration of only the $k$ copies
of the $n$-dimensional Hilbert space at intermediate times $t_{1},t_{2},\ldots,t_{k}$.\end{defn}
\begin{thm}
\label{thm:continuous}Discontinuous Feynman histories have zero contribution to the total Feynman sum $\sum_{s}\psi_{s}$.
Feynman summation over a complete set of continuous quantum histories generates the same result as the total Feynman sum
$\sum_{s}\psi_{s}$ over all histories.
\end{thm}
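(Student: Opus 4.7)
The plan is to exploit the explicit factorization of each history amplitude into a product of Feynman propagators, as established in the preceding definition, so that both claims reduce to an elementary observation about when a product of complex numbers vanishes.

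First, I would fix the convention that a history $\hat{\mathcal{Q}}_{s}=\hat{\mathcal{P}}(\psi_{f})\odot\hat{\mathcal{P}}_{k}\odot\cdots\odot\hat{\mathcal{P}}_{1}\odot\hat{\mathcal{P}}(\psi_{i})$ is \emph{continuous} when every adjacent Feynman propagator along its chain is nonzero, $F(\psi_{m}|\psi_{m-1})=\langle\psi_{m}|\hat{\mathcal{T}}_{m,m-1}|\psi_{m-1}\rangle\neq 0$ for all $m=1,\ldots,k+1$ (with $\psi_{0}\equiv\psi_{i}$ and $\psi_{k+1}\equiv\psi_{f}$), and \emph{discontinuous} otherwise. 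This dichotomy is exhaustive, so the continuous and discontinuous histories together reproduce the full complete set $\{\hat{\mathcal{Q}}_{s}\}_{s=1}^{n^{k}}$ used in Definition~\ref{def:Feyn}.

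Second, since $\psi_{s}=F(\psi_{f}|\psi_{k})F(\psi_{k}|\psi_{k-1})\cdots F(\psi_{1}|\psi_{i})$, the vanishing of any single factor forces $\psi_{s}=0$; this immediately yields the first claim, namely that every discontinuous history contributes zero to $\sum_{s}\psi_{s}$. For the second claim, I would split the complete sum by linearity, $\sum_{s}\psi_{s}=\sum_{s\,\text{cont.}}\psi_{s}+\sum_{s\,\text{discont.}}\psi_{s}$, and observe that the second partial sum vanishes by the first claim, so that Feynman summation over only the continuous histories reproduces the total sum.

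The main obstacle is purely terminological rather than technical: one must justify that ``continuous'' and ``discontinuous'' are meaningful labels in the finite, coarse-grained setting, referring to whether consecutive intermediate projectors are linked by nonzero propagators rather than to continuity in the usual topological sense of paths. Once the definition is pinned down this way, the theorem is a direct algebraic consequence of the product decomposition of $\psi_{s}$ together with linearity of the Feynman sum, with no deeper ingredient required.
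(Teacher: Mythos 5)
Your proposal is correct and follows essentially the same route as the paper: both arguments reduce to the observation that a single vanishing link between consecutive projectors (your scalar propagator $F(\psi_{m}|\psi_{m-1})=0$, the paper's operator product $\hat{\mathcal{P}}_{m}\hat{\mathcal{T}}_{m,m-1}\hat{\mathcal{P}}_{m-1}=0$, which coincide for the rank-one projectors used here) annihilates the entire chain amplitude. Your additional explicit split of the sum into continuous and discontinuous parts by linearity is a minor tidying of what the paper leaves implicit, not a different method.
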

\begin{proof}
The quantum probability amplitude $\psi_{s}$ propagating along an
arbitrary quantum history $\mathcal{\hat{Q}}_{s}=\mathcal{\hat{P}}(\psi_{f})\odot\mathcal{\hat{P}}_{k}\odot\ldots\odot\mathcal{\hat{P}}_{2}\odot\mathcal{\hat{P}}_{1}\odot\mathcal{\hat{P}}(\psi_{i})$,
is calculated from the inner product $\langle\psi_{f}|\hat{K}_{s}|\psi_{i}\rangle$
of the corresponding chain operator $\hat{K}_{s}=\mathcal{\hat{P}}(\psi_{f})\hat{\mathcal{T}}_{f,k}\mathcal{\hat{P}}_{k}\hat{\mathcal{T}}_{k,k-1}\ldots\hat{\mathcal{T}}_{3,2}\mathcal{\hat{P}}_{2}\hat{\mathcal{T}}_{2,1}\mathcal{\hat{P}}_{1}\hat{\mathcal{T}}_{1,i}\mathcal{\hat{P}}(\psi_{i})$.
The quantum time evolution operators $\hat{\mathcal{T}}_{k,k-1}$
are continuous in space and have non-zero product $\mathcal{\hat{P}}_{k}\hat{\mathcal{T}}_{k,k-1}\mathcal{\hat{P}}_{k-1}\neq0$
only between spatially connected projectors $\mathcal{\hat{P}}_{k-1}$ and $\mathcal{\hat{P}}_{k}$.
The presence of two consecutive disconnected projectors $\mathcal{\hat{P}}_{k-1}$
and $\mathcal{\hat{P}}_{k}$ anywhere in the quantum history effectively
zeroes it through the presence of $\mathcal{\hat{P}}_{k}\hat{\mathcal{T}}_{k,k-1}\mathcal{\hat{P}}_{k-1}=0$.\end{proof}

Next, let us briefly review the concept of weak values in Aharonov's
two-state vector formalism. Experimental measurement of weak values
requires a weak coupling between the measured system and the measuring
pointer, multiple experimental runs, post-selection and calculation
of averages \cite{Aharonov1988,Jozsa2007,Aharonov2014,Dressel2014}. Because unknown quantum states cannot be cloned \cite{Wootters1982},
weak values are meaningful only if one is given an ensemble $|\psi_{i}\rangle\otimes|\psi_{i}\rangle\otimes\ldots\otimes|\psi_{i}\rangle$
of quantum systems that are all prepared in the \emph{same }initial
quantum state $|\psi_{i}\rangle$ upon which measurements are made
and only those results are analyzed that end up with a certain post-selected
final state $|\psi_{f}\rangle$.
\begin{defn}
\label{defn:WV}
(Weak value) The \emph{weak value} of an operator $\hat{A}$ at any moment of
time $t_{m}$ during the evolution from initial state
$|\psi_{i}\rangle$ at an initial time $t_{i}$ to a final state $|\psi_{f}\rangle$
at a final time $t_{f}$ is
\begin{equation}
A_{w}=\frac{\langle\psi_{f}|\hat{\mathcal{T}}_{f,m}\hat{A}\hat{\mathcal{T}}_{m,i}|\psi_{i}\rangle}{\langle\psi_{f}|\hat{\mathcal{T}}_{f,i}|\psi_{i}\rangle}\label{eq:Aw}
\end{equation}
where $\hat{\mathcal{T}}_{f,i}=\hat{\mathcal{T}}_{f,m}\hat{\mathcal{T}}_{m,i}$ and $\langle\psi_{f}|\hat{\mathcal{T}}_{f,i}|\psi_{i}\rangle\neq0$.
In Aharonov's two-state vector formalism the pre-selected
state $|\psi_{i}\rangle$ evolves forward in time with the time evolution
operator $\hat{\mathcal{T}}_{m,i}$ and the post-selected state $|\psi_{f}\rangle$
evolves backward in time with the time evolution operator $\hat{\mathcal{T}}_{f,m}^{\dagger}=\hat{\mathcal{T}}_{m,f}$,
namely $(\hat{\mathcal{T}}_{f,m}^{\dagger}|\psi_{f}\rangle)^{\dagger}=\langle\psi_{f}|\hat{\mathcal{T}}_{f,m}$,
so that one employs both a bra and a ket at the same time $t_{m}$ at which $\hat{A}$
is measured \cite{Aharonov2017}.
\end{defn}

\begin{defn}
\label{defn:SWV}
(Sequential weak value) The concept of weak value can be generalized
into \emph{sequential weak value} of several operators $\hat{A}_{1}$,
$\hat{A}_{2}$, $\ldots$, $\hat{A}_{k}$ at several times $t_{1},t_{2},\ldots,t_{k}$
\cite{Mitchison2007,Diosi2016} as
\begin{equation}
(A_{k},\ldots,A_{2},A_{1})_{w}=\frac{\langle\psi_{f}|\hat{\mathcal{T}}_{f,k}\hat{A}_{k}\hat{\mathcal{T}}_{k,k-1}\ldots\hat{A}_{2}\hat{\mathcal{T}}_{2,1}\hat{A}_{1}\hat{\mathcal{T}}_{1,i}|\psi_{i}\rangle}{\langle\psi_{f}|\hat{\mathcal{T}}_{f,i}|\psi_{i}\rangle}\label{eq:Aw-seq}
\end{equation}
where $\hat{\mathcal{T}}_{f,i}=\hat{\mathcal{T}}_{f,k}\hat{\mathcal{T}}_{k,k-1}\ldots\hat{\mathcal{T}}_{2,1}\hat{\mathcal{T}}_{1,i}$.
\end{defn}
Weak values are complex-valued, however, both the real and the imaginary
parts of the weak values defined by Eqs.~\ref{eq:Aw} and \ref{eq:Aw-seq}
can be experimentally measured with the use of weak measurements (cf.
\cite{Aharonov1990,Jozsa2007,Mitchison2007,Svensson2013,Piacentini2016}).

%%%
The mathematical expressions \eqref{eq:Aw} and \eqref{eq:Aw-seq} of weak values arise in the approximate calculation of the pointer shifts when multiplying truncated power series expansions of the exponentiated interaction Hamiltonians between the measured system and the measuring pointers at $k$-times (see \hyperref[sec:Appendix]{Appendix}).

Before we present the main results of this work, we wish to address two technical points. First, we note that all the above was defined for arbitrary operators, but in the next section we shall focus on (not necessarily commuting) projection operators as commonly done when discussing sum over histories. Second, for making the notion of weak measurement feasible, the physical systems in question are assumed to exist in a fine-grained Hilbert space,
which can be taken to be either finite dimensional and consisting
of Planck scale units, or infinitely dimensional (so that standard
differential and integral calculus applies) but effectively described by a finite, coarse-grained Hilbert space. We shall henceforth assume an $n$-dimensional Hilbert space, applicable to the two cases above.
%%%

\section{Main results}
Now we are ready to demonstrate the tight relationship between Aharonov's two-state vector formalism and Feynman's sum-over-histories. We will also elucidate the meaning and properties of sequential weak values of multi-time projection operators.
\begin{thm}
\label{thm:main}The sequential weak value $(\mathcal{P}_{k},\ldots,\mathcal{P}_{2},\mathcal{P}_{1})_{w}$
of multi-time projection operators $\mathcal{\hat{P}}_{1},\mathcal{\hat{P}}_{2},\ldots,\mathcal{\hat{P}}_{k}$
at times $t_{1},t_{2},\ldots,t_{k}$
is equal to the quantum probability amplitude $\psi_{s'}$ propagating along the
individual Feynman history $\mathcal{\hat{Q}}_{s'}=\mathcal{\hat{P}}(\psi_{f})\odot\mathcal{\hat{P}}_{k}\odot\ldots\odot\mathcal{\hat{P}}_{2}\odot\mathcal{\hat{P}}_{1}\odot\mathcal{\hat{P}}(\psi_{i})$,
divided by the total quantum probability amplitude $\sum_{s}\psi_{s}$
of the Feynman sum $\sum_{s}\mathcal{\hat{Q}}_{s}=\mathcal{\hat{P}}(\psi_{f})\odot\hat{I}\odot\ldots\odot\hat{I}\odot\hat{I}\odot\mathcal{\hat{P}}(\psi_{i})=\sum_{s}\mathcal{\hat{P}}(\psi_{f})\odot\mathcal{\hat{P}}_{k,s}\odot\ldots\odot\mathcal{\hat{P}}_{2,s}\odot\mathcal{\hat{P}}_{1,s}\odot\mathcal{\hat{P}}(\psi_{i})$,
$s\in\{1,2,\ldots,n^{k}\}$ over a complete set of quantum histories
from~$|\psi_{i}\rangle$ to~$|\psi_{f}\rangle$.\end{thm}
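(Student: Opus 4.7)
The plan is to verify that the numerator of the sequential weak value from Definition~\ref{defn:SWV} is exactly the history amplitude $\psi_{s'}$, while the denominator is exactly the Feynman sum $\sum_{s}\psi_{s}$; the claim then follows by division.

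For the numerator, I would write out the chain operator $\hat{K}_{s'}$ corresponding to $\mathcal{\hat{Q}}_{s'}$, namely $\hat{K}_{s'}=\mathcal{\hat{P}}(\psi_{f})\hat{\mathcal{T}}_{f,k}\mathcal{\hat{P}}_{k}\hat{\mathcal{T}}_{k,k-1}\cdots\mathcal{\hat{P}}_{1}\hat{\mathcal{T}}_{1,i}\mathcal{\hat{P}}(\psi_{i})$. The boundary projectors $\mathcal{\hat{P}}(\psi_{i})=|\psi_{i}\rangle\langle\psi_{i}|$ and $\mathcal{\hat{P}}(\psi_{f})=|\psi_{f}\rangle\langle\psi_{f}|$ contract with the external bra and ket to produce the scalars $\langle\psi_{f}|\psi_{f}\rangle=1$ and $\langle\psi_{i}|\psi_{i}\rangle=1$, so $\psi_{s'}=\langle\psi_{f}|\hat{K}_{s'}|\psi_{i}\rangle$ collapses to $\langle\psi_{f}|\hat{\mathcal{T}}_{f,k}\mathcal{\hat{P}}_{k}\hat{\mathcal{T}}_{k,k-1}\cdots\mathcal{\hat{P}}_{1}\hat{\mathcal{T}}_{1,i}|\psi_{i}\rangle$, which is precisely the numerator of Eq.~\eqref{eq:Aw-seq}.

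For the denominator, the plan is to factor $\hat{\mathcal{T}}_{f,i}=\hat{\mathcal{T}}_{f,k}\hat{\mathcal{T}}_{k,k-1}\cdots\hat{\mathcal{T}}_{1,i}$ and insert a resolution of identity $\hat{I}=\sum_{n}\mathcal{\hat{P}}_{j,n}$ from Definition~1 between every adjacent pair of propagators at each intermediate time $t_{1},\ldots,t_{k}$. Expanding the sums yields $n^{k}$ terms inside $\langle\psi_{f}|\hat{\mathcal{T}}_{f,i}|\psi_{i}\rangle$; each term is, by construction, the chain-operator inner product for one history $\mathcal{\hat{Q}}_{s}$ of the reduced complete set described in Definition~\ref{def:Feyn}, and is therefore an individual amplitude $\psi_{s}$. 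The sum thus reproduces $\sum_{s}\psi_{s}$ with $s\in\{1,\ldots,n^{k}\}$.

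Combining the two computations gives $(\mathcal{P}_{k},\ldots,\mathcal{P}_{1})_{w}=\psi_{s'}/\sum_{s}\psi_{s}$, as claimed. The only subtlety I anticipate is bookkeeping around the boundary projectors: one must confirm that the $n^{k+2}$-dimensional history space of Definition~1 really does collapse to the $n^{k}$-dimensional reduced sum of Definition~\ref{def:Feyn} precisely because $|\psi_{i}\rangle$ and $|\psi_{f}\rangle$ are unit vectors, so that the weak value's normalization denominator matches the Feynman sum and not some larger object. Theorem~\ref{thm:continuous} is not needed for the identity itself, though it guarantees that spatially disconnected histories in the reduced set contribute zero to both sides.
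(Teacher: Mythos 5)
Your proposal is correct and follows essentially the same route as the paper's proof: both verify that the numerator of Eq.~\eqref{eq:Aw-seq} equals $\psi_{s'}$ via the collapse of the boundary projectors onto the pre- and post-selected states, and that the denominator equals $\sum_{s}\psi_{s}$ because summing the complete set of projectors at each intermediate time yields (equivalently, inserting) the identity, reducing the sum of chain operators to $\langle\psi_{f}|\hat{\mathcal{T}}_{f,i}|\psi_{i}\rangle$. The only cosmetic difference is that you run the denominator computation in the reverse direction (inserting identities into $\hat{\mathcal{T}}_{f,i}$ rather than collapsing the sum of chain operators), which is the same identity read backwards.
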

\begin{proof}
The quantum probability amplitude for the individual Feynman history
$\mathcal{\hat{Q}}_{s'}=\mathcal{\hat{P}}(\psi_{f})\odot\mathcal{\hat{P}}_{k}\odot\ldots\odot\mathcal{\hat{P}}_{2}\odot\mathcal{\hat{P}}_{1}\odot\mathcal{\hat{P}}(\psi_{i})$
is given by the corresponding chain operator
\begin{eqnarray}
\psi_{s'} & = & \langle\psi_{f}|\hat{K}_{s'}|\psi_{i}\rangle\nonumber \\
 & = & \langle\psi_{f}|\mathcal{\hat{P}}(\psi_{f})\hat{\mathcal{T}}_{f,k}\mathcal{\hat{P}}_{k}\hat{\mathcal{T}}_{k,k-1}\ldots\mathcal{\hat{P}}_{2}\hat{\mathcal{T}}_{2,1}\mathcal{\hat{P}}_{1}\hat{\mathcal{T}}_{1,i}\mathcal{\hat{P}}(\psi_{i})|\psi_{i}\rangle\nonumber \\
 & = & \langle\psi_{f}|\hat{\mathcal{T}}_{f,k}\mathcal{\hat{P}}_{k}\hat{\mathcal{T}}_{k,k-1}\ldots\mathcal{\hat{P}}_{2}\hat{\mathcal{T}}_{2,1}\mathcal{\hat{P}}_{1}\hat{\mathcal{T}}_{1,i}|\psi_{i}\rangle\label{eq:two-vec-k}
\end{eqnarray}
which is exactly the numerator in Eq.~\ref{eq:Aw-seq}. Thus, two-state
vectors of multi-time projection operators in Aharonov's two-state
vector formalism are equivalent to quantum probability amplitudes
propagating along a Feynman history. Similarly, the total quantum
probability amplitude for the Feynman sum over all quantum histories
$\sum_{s}\mathcal{\hat{Q}}_{s}=\mathcal{\hat{P}}(\psi_{f})\odot\hat{I}\odot\ldots\odot\hat{I}\odot\hat{I}\odot\mathcal{\hat{P}}(\psi_{i})$
is given by the sum of all chain operators
\begin{eqnarray}
\sum_{s}\psi_{s} & = & \langle\psi_{f}|\sum_{s}\hat{K}_{s}|\psi_{i}\rangle\nonumber \\
 & = & \langle\psi_{f}|\mathcal{\hat{P}}(\psi_{f})\hat{\mathcal{T}}_{f,k}\hat{I}\hat{\mathcal{T}}_{k,k-1}\ldots\hat{I}\hat{\mathcal{T}}_{2,1}\hat{I}\hat{\mathcal{T}}_{1,i}\mathcal{\hat{P}}(\psi_{i})|\psi_{i}\rangle\nonumber \\
 & = & \langle\psi_{f}|\hat{\mathcal{T}}_{f,k}\hat{I}\hat{\mathcal{T}}_{k,k-1}\ldots\hat{I}\hat{\mathcal{T}}_{2,1}\hat{I}\hat{\mathcal{T}}_{1,i}|\psi_{i}\rangle\nonumber \\
 & = & \langle\psi_{f}|\hat{\mathcal{T}}_{f,i}|\psi_{i}\rangle\label{eq:two-vec-k-sum}
\end{eqnarray}
which is exactly the denominator in Eq.~\ref{eq:Aw-seq}. Eq.~\ref{eq:two-vec-k-sum}
also shows that the denominator of weak values in Aharonov's two-vector
state formalism is a disguised two-state vector of multi-time identity
operator. Dividing Eq.~\ref{eq:two-vec-k} by \ref{eq:two-vec-k-sum}
gives
\begin{equation}
(\mathcal{P}_{k},\ldots,\mathcal{P}_{2},\mathcal{P}_{1})_{w}=\frac{\psi_{s'}}{\sum_{s}\psi_{s}}\label{eq:main}
\end{equation}
Because the ordinary weak values (Eq.~\ref{eq:Aw}) serve as a special
single-time case of sequential weak values (Eq.~\ref{eq:Aw-seq}),
Eq.~\ref{eq:main} holds true for Feynman histories with a single
intermediate time as well. Interestingly, Eq.~\ref{eq:main} even
makes sense for the trivial case with no intermediate time points
in the quantum history $\mathcal{Q}=\mathcal{\hat{P}}(\psi_{f})\odot\mathcal{\hat{P}}(\psi_{i})$
where it returns the weak value of the identity operator $I_{w}=1$.
\end{proof}
%%%
Equation \ref{eq:main} provides a direct link between the theory of weak values in weak measurements, which require
a small, but strictly non-zero perturbation, i.e. $g>0$, and Feynman sum-over-histories, which exactly quantifies
quantum interference of virtual quantum histories without any external
coupling, i.e. $g=0$. Thus, we demonstrate unambiguously that weak
values are not an artifact arising from the small perturbation parameter $g$,
but are rather descriptive properties of quantum systems that are
exactly defined at $g=0$. For example, in experimental measurement of a single-time weak value, the pointer shift is $g\textrm{Re}\left[A_{w}\right]$
or $g\textrm{Im}\left[A_{w}\right]$ plus a higher order correction
term $\mathcal{O}(g^{3})$ (see \hyperref[sec:Appendix]{Appendix}), hence due to the pointer shift dependence
on $g$, the weak value can be measured with arbitrarily small, but
non-zero error $\mathcal{O}(g^{3})$. By considering the theory of
weak measurement alone, where weak values correspond to, and are interpreted as, average pointer shifts \cite{Wu2013,Shomroni2013}, one may be misled into thinking that the weak value
is only defined as a limit at $g\to 0$, while at $g=0$ due to the zero
pointer shift there is no weak value to be extracted. The mathematical
technique for Feynman summation, may however provide a proper context for better understanding
the meaning of weak values as relative quantum probability amplitudes
at zero disturbance. To measure such amplitudes, which by definition
are at zero disturbance ($g=0$), Aharonov \emph{et al.} \cite{Aharonov1988} developed
the weak measurement scheme that allows for controlling
the error in the measurement of the weak values, making the error
arbitrarily small for sufficiently small $g$.
%%%

From the measurability of weak values, we can prove that quantum probability
amplitudes along individual virtual Feynman histories entering into
a quantum superposed Feynman sum are also measurable given an ensemble
$|\psi_{i}\rangle\otimes|\psi_{i}\rangle\otimes\ldots\otimes|\psi_{i}\rangle$
of quantum systems that are all prepared in the \emph{same }initial
state $|\psi_{i}\rangle$.
\begin{thm}
Measured sequential weak value $(\mathcal{P}_{k},\ldots,\mathcal{P}_{2},\mathcal{P}_{1})_{w}$
of multi-time projection operators $\mathcal{\hat{P}}_{1},\mathcal{\hat{P}}_{2},\ldots,\mathcal{\hat{P}}_{k}$
could be converted (up to a pure phase factor $e^{\imath\theta}$)
into quantum amplitude $\psi_{s'}$ for the individual quantum history
$\mathcal{\hat{Q}}_{s'}=\mathcal{\hat{P}}(\psi_{f})\odot\mathcal{\hat{P}}_{k}\odot\ldots\odot\mathcal{\hat{P}}_{2}\odot\mathcal{\hat{P}}_{1}\odot\mathcal{\hat{P}}(\psi_{i})$
entering into a quantum superposed Feynman sum $\sum_{s}\psi_{s}$
via multiplication of the weak value $(\mathcal{P}_{k},\ldots,\mathcal{P}_{2},\mathcal{P}_{1})_{w}$
by the positive square root $|\langle\psi_{f}|\hat{\mathcal{T}}_{f,i}|\psi_{i}\rangle|$
of the experimentally measured quantum probability $p=|\langle\psi_{f}|\hat{\mathcal{T}}_{f,i}|\psi_{i}\rangle|^{2}$
for an initial pre-selected state $|\psi_{i}\rangle$ to end at the
final post-selected state $|\psi_{f}\rangle$.\end{thm}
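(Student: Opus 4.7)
The plan is to combine the identification established in Theorem~\ref{thm:main} with the standard operational interpretation of projective measurement statistics. By Theorem~\ref{thm:main}, the sequential weak value factors as $(\mathcal{P}_{k},\ldots,\mathcal{P}_{1})_{w}=\psi_{s'}/\langle\psi_{f}|\hat{\mathcal{T}}_{f,i}|\psi_{i}\rangle$, so the only missing ingredient for recovering $\psi_{s'}$ from the weak value is the denominator $\langle\psi_{f}|\hat{\mathcal{T}}_{f,i}|\psi_{i}\rangle$, whose modulus is directly measurable and whose phase is a global gauge freedom.

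First I would invoke the Born rule: the quantity $p=|\langle\psi_{f}|\hat{\mathcal{T}}_{f,i}|\psi_{i}\rangle|^{2}$ is, by construction, the probability that an unperturbed pre-selected $|\psi_{i}\rangle$ passes a projective post-selection onto $|\psi_{f}\rangle$, and can therefore be estimated from the fraction of runs which yield successful post-selection, giving the positive square root $\sqrt{p}=|\langle\psi_{f}|\hat{\mathcal{T}}_{f,i}|\psi_{i}\rangle|$. Writing $\langle\psi_{f}|\hat{\mathcal{T}}_{f,i}|\psi_{i}\rangle=\sqrt{p}\,e^{\imath\theta}$, the central step is the one-line algebraic manipulation
\begin{equation*}
\sqrt{p}\cdot(\mathcal{P}_{k},\ldots,\mathcal{P}_{2},\mathcal{P}_{1})_{w}=\frac{|\langle\psi_{f}|\hat{\mathcal{T}}_{f,i}|\psi_{i}\rangle|}{\langle\psi_{f}|\hat{\mathcal{T}}_{f,i}|\psi_{i}\rangle}\,\psi_{s'}=e^{-\imath\theta}\,\psi_{s'},
\end{equation*}
which delivers $\psi_{s'}$ up to the pure phase factor claimed in the statement (with the sign of $\theta$ absorbed into its definition). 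The measurability of the left-hand side relies on two independent protocols which the paper has already recalled: the weak-measurement scheme, which yields both $\textrm{Re}\,(\mathcal{P}_{k},\ldots,\mathcal{P}_{1})_{w}$ and $\textrm{Im}\,(\mathcal{P}_{k},\ldots,\mathcal{P}_{1})_{w}$ from ensemble-averaged pointer shifts, and ordinary projective post-selection statistics, which yield $\sqrt{p}$.

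I expect the main obstacle to be conceptual rather than computational: one has to justify that the residual phase $e^{\imath\theta}$ is genuinely unextractable from any experiment performed solely on the pre- and post-selected ensemble, since $|\psi_{i}\rangle$ and $|\psi_{f}\rangle$ themselves are defined only up to global phases, and simultaneously argue that this limitation is benign because $\theta$ is a \emph{single} global phase shared by every history amplitude $\psi_{s'}$ entering the Feynman sum. Consequently, the physically meaningful content---the relative magnitudes and relative phases between competing virtual histories---is fully reconstructed, which is precisely what is needed to claim, in the sense of the preceding discussion, that sequential weak values provide direct experimental access to the quantum probability amplitudes of individual coherently superposed Feynman histories.
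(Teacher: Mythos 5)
Your proposal is correct and follows essentially the same route as the paper: both start from $\psi_{s'}=(\mathcal{P}_{k},\ldots,\mathcal{P}_{1})_{w}\langle\psi_{f}|\hat{\mathcal{T}}_{f,i}|\psi_{i}\rangle$ (Theorem~\ref{thm:main}), write the denominator as $|\langle\psi_{f}|\hat{\mathcal{T}}_{f,i}|\psi_{i}\rangle|e^{\imath\theta}$ with the modulus obtained from the post-selection probability, and note that the residual phase is a single global phase common to all history amplitudes and hence irrelevant to the interference structure. Your added remarks on the Born-rule measurability of $p$ and the unextractability of $\theta$ merely make explicit what the paper leaves implicit.
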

\begin{proof}
From Eqs.~\ref{eq:Aw-seq} and \ref{eq:two-vec-k} we can express
$\psi_{s'}$ through the weak value as
\begin{equation}
\psi_{s'}=(\mathcal{P}_{k},\ldots,\mathcal{P}_{2},\mathcal{P}_{1})_{w}\langle\psi_{f}|\hat{\mathcal{T}}_{f,i}|\psi_{i}\rangle
\end{equation}
Since $\langle\psi_{f}|\hat{\mathcal{T}}_{f,i}|\psi_{i}\rangle$ is
a complex number it can be expressed as a product of its real-valued
modulus $\left|\langle\psi_{f}|\hat{\mathcal{T}}_{f,i}|\psi_{i}\rangle\right|$
times a pure phase $e^{\imath\theta}$. Thus, for the quantum probability
amplitude, we have
\begin{equation}
\psi_{s'}=(\mathcal{P}_{k},\ldots,\mathcal{P}_{2},\mathcal{P}_{1})_{w}|\langle\psi_{f}|\hat{\mathcal{T}}_{f,i}|\psi_{i}\rangle|e^{\imath\theta}\label{eq:psi-k}
\end{equation}
In the weak value formula (Eq.~\ref{eq:main}), the pure phase $e^{\imath\theta}$
is canceled down from the numerator and denominator. Because removing the pure
phase $e^{\imath\theta}$ from each of the superposed quantum histories
$\psi_{s}$ entering into the Feynman sum
\begin{equation}
\sum_{s}\psi_{s}=|\langle\psi_{f}|\hat{\mathcal{T}}_{f,i}|\psi_{i}\rangle|e^{\imath\theta}\label{eq:psi-k-sum}
\end{equation}
does not affect the quantum interference effects, the weak values
can be used to directly probe Feynman's sum-over-histories formulation
of quantum mechanics.
\end{proof}
%%%
Sequential weak values are defined with the use of quantum
observables $\hat{A}_{1}$, $\hat{A}_{2}$, $\ldots$,
$\hat{A}_{k}$ at $k$ times (Definition \ref{defn:SWV}). Therefore, in general,
sequential weak values are not the normalized quantum probability amplitudes propagating
along quantum histories. The spectral decompositions of
observables in Eq.~\ref{eq:Aw-seq} are given by
$\hat{A}_{1}=\sum_{n_{1}}\lambda_{n_{1}}\hat{\mathcal{P}}_{n_{1}}$,
$\hat{A}_{2}=\sum_{n_{2}}\lambda_{n_{2}}\hat{\mathcal{P}}_{n_{2}}$, $\ldots$,
$\hat{A}_{k}=\sum_{n_{k}}\lambda_{n_{k}}\hat{\mathcal{P}}_{n_{k}}$,
where $n_1$, $n_2$, $\ldots$, $n_k$ are indices that may vary independently, $\{\lambda_{n_{1}}\}$, $\{\lambda_{n_{2}}\}$, $\dots$, $\{\lambda_{n_{k}}\}$ are sets of eigenvalues and
$\{\hat{\mathcal{P}}_{n_{1}}\}$, $\{\hat{\mathcal{P}}_{n_{2}}\}$, $\ldots$, $\{\hat{\mathcal{P}}_{n_{k}}\}$ are sets of corresponding projection operators
for the eigenvectors of $\hat{A}_{1}$, $\hat{A}_{2}$, $\ldots$ $\hat{A}_{k}$. Consequently, a
general sequential weak value will be a weighted sum of quantum probability amplitudes
for Feynman histories, each of which is multiplied by a non-normalized
weight given by a product of eigenvalues $\lambda_{n_{1}}\lambda_{n_{2}}\ldots\lambda_{n_{k}}$.
To illustrate the point, let us set $\hat{H}=0$ to suppress all time
evolution operators i.e. $\hat{\mathcal{T}}_{k,k-1}=\hat{I}$, thereby
obtaining for the sequential weak value:
\begin{alignat}{1}
 & (A_{k},\ldots,A_{2},A_{1})_{w}=\frac{\langle\psi_{f}|\hat{A}_{k}\ldots\hat{A_{2}}\hat{A}_{1}|\psi_{i}\rangle}{\langle\psi_{f}|\psi_{i}\rangle}\nonumber \\
 & =\frac{\langle\psi_{f}|{\displaystyle \sum_{n_{k}}}\lambda_{n_{k}}\hat{\mathcal{P}}_{n_{k}}\ldots{\displaystyle \sum_{n_{2}}}\lambda_{n_{2}}\hat{\mathcal{P}}_{n_{2}}{\displaystyle \sum_{n_{1}}}\lambda_{n_{1}}\hat{\mathcal{P}}_{n_{1}}|\psi_{i}\rangle}{\langle\psi_{f}|\psi_{i}\rangle}\nonumber \\
 & ={\displaystyle \sum_{n_{1},n_{2},\ldots,n_{k}}}\lambda_{n_{1}}\lambda_{n_{2}}\ldots\lambda_{n_{k}}\frac{\langle\psi_{f}|\hat{\mathcal{P}}_{n_{k}}\ldots\hat{\mathcal{P}}_{n_{2}}\hat{\mathcal{P}}_{n_{1}}|\psi_{i}\rangle}{\langle\psi_{f}|\psi_{i}\rangle}\nonumber \\
\label{eq:gSWV}
\end{alignat}
Such a general sequential weak value $(A_{k},\ldots,A_{2},A_{1})_{w}$ is not subject
to the Born rule and does not generate a probability for observing
the corresponding quantum (Feynman) history (defined with the projectors
only). Our main point is that by restricting the general observables down to
\emph{projection operators} in sequential weak values, one can connect Feynman
sum-over-histories approach with the fruitful area of weak measurements
and weak values. Note that for each sequential weak value of multi-time projection operators,
$(\mathcal{P}_{k},\ldots,\mathcal{P}_{2},\mathcal{P}_{1})_{w}$ there
is a corresponding Feynman history and the probability for measuring
that history through a series of strong measurements at $k$ times is given by the Born rule,
i.e. $\textrm{Prob}\left[\mathcal{P}_{k},\ldots,\mathcal{P}_{2},\mathcal{P}_{1}\right]=\left|(\mathcal{P}_{k},\ldots,\mathcal{P}_{2},\mathcal{P}_{1})_{w}\langle\psi_{f}|\psi_{i}\rangle\right|^{2}$.
%%%

Sequential weak values of multi-time projection operators
are able to directly probe the quantum
probability amplitudes $\psi_{s}$ along individual virtual Feynman
histories that enter into a quantum superposed Feynman sum $\sum_{s}\psi_{s}$.
Because Feynman's sum-over-histories approach to quantum mechanics
works for a complete set of orthogonal quantum histories
in the history Hilbert space, we can derive an exact value for the
sum of the corresponding sequential weak values:
\begin{thm}
\label{thm:Weak-sum-1}For a complete set of orthogonal quantum histories
$\{\mathcal{\hat{Q}}_{1},\mathcal{\hat{Q}}_{2},\ldots,\mathcal{\hat{Q}}_{s}\}$
that span the history Hilbert space of a quantum transition with non-zero
probability, the complex sequential weak values sum up to unity $\sum_{s}(\mathcal{P}_{k,s},\ldots,\mathcal{P}_{2,s},\mathcal{P}_{1,s})_{w}=1$.\end{thm}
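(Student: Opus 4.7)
The plan is to reduce the claim directly to Theorem \ref{thm:main}, which already expresses each sequential weak value as a ratio of a single history amplitude to the total Feynman sum. Completeness of the set of orthogonal histories will then let the numerators reassemble into the denominator, trivially yielding unity.

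First I would invoke Theorem \ref{thm:main} to write, for each history index $s$,
\[
(\mathcal{P}_{k,s},\ldots,\mathcal{P}_{2,s},\mathcal{P}_{1,s})_{w}=\frac{\psi_{s}}{\sum_{s'}\psi_{s'}},
\]
where by Eq.~\ref{eq:two-vec-k-sum} the common denominator equals the full two-state amplitude $\langle\psi_{f}|\hat{\mathcal{T}}_{f,i}|\psi_{i}\rangle$. The hypothesis that the quantum transition has non-zero probability, $|\langle\psi_{f}|\hat{\mathcal{T}}_{f,i}|\psi_{i}\rangle|^{2}\neq 0$, guarantees that this denominator does not vanish, so every weak value entering the sum is well defined.

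Next I would sum over the complete set $\{\mathcal{\hat{Q}}_{s}\}$ of orthogonal histories that span the (intermediate-time) history Hilbert space:
\[
\sum_{s}(\mathcal{P}_{k,s},\ldots,\mathcal{P}_{2,s},\mathcal{P}_{1,s})_{w}=\frac{\sum_{s}\psi_{s}}{\sum_{s'}\psi_{s'}}=1.
\]
The essential input behind the numerator collapsing to the denominator is precisely the content of Definition \ref{def:Feyn} together with Eq.~\ref{eq:two-vec-k-sum}: summing the chain operators over a complete set of intermediate projectors inserts $\sum_{n}\mathcal{\hat{P}}_{m,n}=\hat{I}$ at every time $t_{m}$, so the telescoped product of projectors and propagators reduces to $\hat{\mathcal{T}}_{f,i}$ sandwiched between $|\psi_{i}\rangle$ and $\langle\psi_{f}|$.

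I do not expect a genuine obstacle here; the only point warranting care is the indexing convention. The outer index $s$ in the statement runs over the $n^{k}$ histories bracketed by $\mathcal{\hat{P}}(\psi_{i})$ and $\mathcal{\hat{P}}(\psi_{f})$, while the second subscript on $\mathcal{P}_{m,s}$ selects the specific intermediate projector at $t_{m}$ belonging to history $s$; one must verify that completeness means every multi-index $(n_{1},\ldots,n_{k})\in\{1,\ldots,n\}^{k}$ is realized by exactly one $s$, so that the resolutions of identity apply cleanly. A closing remark could then observe that this identity is a weak-value analogue of probability conservation in the consistent-histories framework: complex sequential weak values over a complete orthogonal set of Feynman histories sum to unity, matching the interpretation established in Theorem \ref{thm:main}.
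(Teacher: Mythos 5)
Your proposal is correct and follows essentially the same route as the paper's proof: both invoke Theorem \ref{thm:main} to write each sequential weak value as $\psi_{s}/\sum_{s'}\psi_{s'}$, use the non-zero transition probability to guarantee a non-vanishing denominator, and sum over the complete set so the numerators reassemble into the denominator. Your added remark on the indexing of the $n^{k}$ histories is a reasonable clarification but does not change the argument.
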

\begin{proof}
Quantum transition with non-zero probability ensures that all weak
values are finite due to non-zero denominator, $\sum_{s}\psi_{s}>0$.
Taking the sum over all histories $s\in\{1,2,\ldots,n^{k}\}$ on both sides of Eq.~\ref{eq:main}
gives
\begin{eqnarray}
\sum_{s}(\mathcal{P}_{k,s},\ldots,\mathcal{P}_{2,s},\mathcal{P}_{1,s})_{w} & = & \frac{\psi_{1}}{\sum_{s}\psi_{s}}+\frac{\psi_{2}}{\sum_{s}\psi_{s}}+\ldots+\frac{\psi_{s'}}{\sum_{s}\psi_{s}}\nonumber \\
 & = & \frac{\sum_{s}\psi_{s}}{\sum_{s}\psi_{s}}=1
\end{eqnarray}
\end{proof}
The converse of Theorem \ref{thm:Weak-sum-1} is not true, namely,
the fact that the weak values for a set of quantum histories
sum to unity $\sum_{s}(\mathcal{P}_{k,s},\ldots,\mathcal{P}_{2,s},\mathcal{P}_{1,s})_{w}=1$
does not imply that the set of quantum histories is complete.
\begin{cor}
Sequential weak values of multi-time projection operators are not
conditional probabilities, but relative probability amplitudes $(\mathcal{P}_{k},\ldots,\mathcal{P}_{2},\mathcal{P}_{1})_{w}=\frac{\psi_{s'}}{\sum_{s}\psi_{s}}$.
Weak values are measured by the mean value of the pointer shift of the measuring
device, which makes quantum probability amplitudes measurable provided
that one is given an ensemble $|\psi_{i}\rangle\otimes|\psi_{i}\rangle\otimes\ldots\otimes|\psi_{i}\rangle$
of quantum systems that are all prepared in the \emph{same }initial
state $|\psi_{i}\rangle$.\end{cor}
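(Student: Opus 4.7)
The plan is to split the corollary into its two assertions and dispatch each by invoking results already established in the paper, without any new machinery. The first assertion is the identity $(\mathcal{P}_{k},\ldots,\mathcal{P}_{1})_{w}=\psi_{s'}/\sum_{s}\psi_{s}$ together with the interpretive claim that this quantity is a ratio of complex amplitudes, not a conditional probability. The second assertion is that weak values, and hence these relative amplitudes, are accessible from pointer statistics given an ensemble of identically pre-selected systems.

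First I would reduce the displayed equation to a direct citation of Theorem \ref{thm:main}, whose Eq.~\ref{eq:main} is verbatim this identity. To justify the word \emph{not} in front of \emph{conditional probability}, I would point out that both the numerator $\psi_{s'}$ and the denominator $\sum_{s}\psi_{s}$ are generically complex, being built out of Feynman propagators, so their quotient is a complex number, whereas a conditional probability must be real and in $[0,1]$. As a sharper contrast, the sum rule $\sum_{s}(\mathcal{P}_{k,s},\ldots,\mathcal{P}_{1,s})_{w}=1$ from Theorem \ref{thm:Weak-sum-1} is a sum of complex terms rather than a normalization of non-negative probabilities, and the Born formula $\mathrm{Prob}=|(\mathcal{P}_{k},\ldots,\mathcal{P}_{1})_{w}\langle\psi_{f}|\psi_{i}\rangle|^{2}$ displayed just before the corollary shows that a modulus-squaring is still required to pass from weak values to genuine history probabilities.

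Second, I would dispatch measurability by appealing to the weak-measurement protocol reviewed after Definition \ref{defn:WV} and worked out in the Appendix: with weak coupling of strength $g$, the mean pointer shift after post-selection on $|\psi_{f}\rangle$ equals $g\,\mathrm{Re}[(\mathcal{P}_{k},\ldots,\mathcal{P}_{1})_{w}]$ or $g\,\mathrm{Im}[(\mathcal{P}_{k},\ldots,\mathcal{P}_{1})_{w}]$ up to $\mathcal{O}(g^{3})$. Because this shift is a statistical mean and, by the no-cloning theorem \cite{Wootters1982}, cannot be reconstructed from a single copy, the experimenter must supply an ensemble $|\psi_{i}\rangle\otimes\cdots\otimes|\psi_{i}\rangle$ of identically prepared systems, which is precisely the second half of the corollary.

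The main obstacle is interpretive rather than technical: one must explain clearly why a set of complex numbers summing to unity should not be mistaken for a probability distribution over histories. I would lean on the converse-failure remark immediately preceding the corollary — that $\sum_{s}(\cdots)_{w}=1$ does not even imply completeness — and on the Born formula above, which shows that probabilities emerge only after multiplying the weak value by $\langle\psi_{f}|\psi_{i}\rangle$ and squaring the modulus; this heads off the temptation to read the sum rule as a Kolmogorov axiom in disguise.
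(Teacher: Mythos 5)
Your proposal matches the paper's (implicit) justification: the corollary carries no separate proof in the paper and rests exactly where you place it, namely on Theorem~\ref{thm:main} (Eq.~\ref{eq:main}) for the identity and its interpretation as a ratio of complex amplitudes, and on the weak-measurement discussion and no-cloning remark for the ensemble requirement. One correction to your measurability step: for a $k$-time sequential weak value the signal is not a single pointer shift of order $g$ but a correlation of $k$ pointer readings of order $g^{k}$, from which products of lower-order weak values must be subtracted (Eqs.~\ref{eq:SWV-Re} and \ref{eq:SWV-Im}); the $g\,\textrm{Re}[\cdot]$ and $g\,\textrm{Im}[\cdot]$ formulas you quote are the single-time special case. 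This does not affect the conclusion, since the Appendix shows both parts of the sequential weak value remain extractable from ensemble statistics.
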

\begin{thm}
\label{thm:Feyn-Schrod}Analysis of quantum interference effects within a complete set of mutually orthogonal quantum histories $\{\mathcal{\hat{Q}}_{1},\hat{\mathcal{Q}}_{2},\ldots,\hat{\mathcal{Q}}_{s}\}$ from $|\psi_{i}\rangle$ to $|\psi_{f}\rangle$ is consistent with the standard quantum
mechanical picture.\end{thm}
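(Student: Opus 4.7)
The plan is to translate the consistency claim into a single identity between the coherent Feynman sum over a complete history set and the ordinary Schr\"odinger transition amplitude, so that every probabilistic or interference-theoretic statement extracted from the history framework reduces to one already derivable from standard unitary evolution and the Born rule. By Theorem~\ref{thm:main} together with Eq.~\ref{eq:two-vec-k-sum}, the coherent sum $\sum_{s}\psi_{s}$ over the complete orthogonal history set equals $\langle\psi_{f}|\hat{\mathcal{T}}_{f,i}|\psi_{i}\rangle$. This identity is the engine of the proof; the task is to show that it carries enough content to establish consistency with the Schr\"odinger picture.

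First I would make precise what is meant by ``analysis of quantum interference effects'': any such analysis consists of forming weighted partial sums $\sum_{s\in\mathcal{S}}\psi_{s}$ over subsets $\mathcal{S}$ of the complete history set and taking moduli-squared, since this is precisely how constructive and destructive interference between coarse-grained alternatives manifests. Using Theorem~\ref{thm:main}, each $\psi_{s}$ equals $(\mathcal{P}_{k,s},\ldots,\mathcal{P}_{1,s})_{w}\langle\psi_{f}|\hat{\mathcal{T}}_{f,i}|\psi_{i}\rangle$, so all such interference quantities can be expressed as linear combinations of measured weak values rescaled by the experimentally accessible transition amplitude.

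Next I would verify the exhaustive case: summing over all $s\in\{1,2,\ldots,n^{k}\}$ with uniform weight reproduces the Born-rule probability $|\langle\psi_{f}|\hat{\mathcal{T}}_{f,i}|\psi_{i}\rangle|^{2}$, and Theorem~\ref{thm:Weak-sum-1} guarantees $\sum_{s}(\mathcal{P}_{k,s},\ldots,\mathcal{P}_{1,s})_{w}=1$ so no weight is unaccounted for. Coarse-graining some subset of intermediate projectors back to the identity simply replaces the corresponding chain-operator block with factors of $\hat{\mathcal{T}}_{k,k-1}\hat{I}\hat{\mathcal{T}}_{k-1,k-2}$, recovering standard unitary evolution on that slice and showing that every coarse-graining limit of the history description agrees with a Schr\"odinger-evolved state.

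The main obstacle will be conceptual rather than technical: the theorem as stated is qualitative, so the real work lies in fixing operationally what ``consistent with the standard picture'' means and then ruling out the worry that the individual complex-valued weak values $(\mathcal{P}_{k,s},\ldots,\mathcal{P}_{1,s})_{w}$ attached to single histories might carry extra predictive content beyond what the Schr\"odinger evolution supplies. I would dispose of this by stressing that the $\psi_{s}$ are amplitudes, not probabilities, so no Kolmogorov additivity is asserted between mutually non-commuting histories; the only testable aggregate is $|\sum_{s\in\mathcal{S}}\psi_{s}|^{2}$, and by the identity above every such aggregate is already a prediction of standard quantum mechanics.
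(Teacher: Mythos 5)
Your proposal is correct and rests on the same key identity as the paper's own proof: completeness of the history set collapses the intermediate resolutions of identity so that $\sum_{s}\hat{K}_{s}=\hat{\mathcal{P}}(\psi_{f})\hat{\mathcal{T}}_{f,i}\hat{\mathcal{P}}(\psi_{i})$ and hence $\sum_{s}\psi_{s}=\langle\psi_{f}|\hat{\mathcal{T}}_{f,i}|\psi_{i}\rangle$ is exactly the Schr\"{o}dinger transition amplitude. The additional scaffolding you supply (partial sums over subsets, the Born-rule check via Theorem~\ref{thm:Weak-sum-1}, the amplitude-versus-probability caveat) is a harmless elaboration of the same argument rather than a different route.
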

\begin{proof}
By the completeness of the set of quantum histories
entering into the Feynman sum, we are guaranteed to obtain identity
operators for all intermediate times $\sum_{s}\hat{\mathcal{Q}}_{s}=\hat{\mathcal{P}}(\psi_{f})\odot\hat{I}\odot\ldots\odot\hat{I}\odot\mathcal{\hat{P}}(\psi_{i})$.
Therefore, the corresponding sum of chain operators is
\begin{eqnarray}
\sum_{s}\hat{K}_{s} & = & \hat{\mathcal{P}}(\psi_{f})\hat{\mathcal{T}}_{f,k}\hat{I}\hat{\mathcal{T}}_{k,k-1}\ldots\hat{\mathcal{T}}_{2,1}\hat{I}\hat{\mathcal{T}}_{1,i}\mathcal{\hat{P}}(\psi_{i})\nonumber \\
 & = & \mathcal{\hat{P}}(\psi_{f})\hat{\mathcal{T}}_{f,i}\mathcal{\hat{P}}(\psi_{i})\label{eq:sum-1}
\end{eqnarray}
Expressing $\hat{\mathcal{T}}_{f,i}$ in terms of the Hamiltonian shows that the total Feynman sum is just the standard quantum probability amplitude that one would obtain from
unitary evolution according to the Schr\"{o}dinger equation
\begin{equation}
\langle\psi_{f}|\sum_{s}\hat{K}_{s}|\psi_{i}\rangle=\langle\psi_{f}|\hat{\mathcal{T}}_{f,i}|\psi_{i}\rangle=\langle\psi_{f}|
e^{-\frac{\imath}{\hbar}\int_{t_{i}}^{t_{f}}\hat{H}(t)\,dt}|\psi_{i}\rangle\label{eq:Schrod}
\end{equation}
Noteworthy, orthogonality of the corresponding chain operators $\{\hat{K}_{1},\hat{K}_{2},\ldots,\hat{K}_{s}\}$ was not assumed, which shows that Feynman summation is not equivalent to the decoherent (consistent) histories approach that requires $\textrm{Tr}\left(\hat{K}_{j}\hat{K}_{j'}\right)=0$ for $j\neq j'$ \cite{Gell-Mann1990,Gell-Mann1993,Griffiths1984,Griffiths1993,Griffiths2003,Hartle1993,Halliwell1995}.
\end{proof}
Analysis of weak values corresponding to a complete set
of mutually orthogonal quantum histories that span the history Hilbert
space avoids paradoxes because the orthogonality ensures that one weak
value cannot be used to infer claims for more than one history,
and the completeness of the set of histories implies consistency with
the Schr\"{o}dinger equation (Theorem \ref{thm:Feyn-Schrod}). Due to
the linearity of sums in quantum mechanical inner products $\langle\psi_{f}|\sum_{s}\hat{K}_{s}|\psi_{i}\rangle=\sum_{s}\langle\psi_{f}|\hat{K}_{s}|\psi_{i}\rangle$,
Feynman's approach provides a natural language for discussion of quantum
interference effects between individual quantum histories \cite{Cotler2015,Cotler2016,Cotler2017,Nowakowski2017}.
%%%
Running the proof of Theorem \ref{thm:Feyn-Schrod} backwards also shows that
starting from the Schr\"{o}dinger equation (Eq.~\ref{eq:Schrod}), one
could obtain correct quantum probability amplitudes
by inserting identity operators
at intermediate time points and then summing over all quantum histories
spanning the history Hilbert space (Eq.~\ref{eq:sum-1}).
%%%
\begin{thm}
\label{thm:seq-w-superposition}Sequential weak values evaluated at different
number of intermediate times correspond to different coarse-grainings
of the history Hilbert space. Consequently, $(k-1)$-time sequential
weak values are quantum superpositions of $k$-time sequential weak
values.\end{thm}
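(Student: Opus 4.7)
The plan is to derive the superposition relation by inserting a resolution of the identity at the extra intermediate time and invoking the group property of the time-evolution operators. Let the finer coarse-graining resolve the $k$ intermediate times $t_1,\ldots,t_k$, and let the coarser coarse-graining drop one such time $t_m$; the argument extends to dropping several times at once by straightforward iteration.

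First, I would write the $(k-1)$-time sequential weak value explicitly in the form of Eq.~\ref{eq:Aw-seq}. Because no projector sits at $t_m$, the propagator segment between $t_{m-1}$ and $t_{m+1}$ enters as the composite $\hat{\mathcal{T}}_{m+1,m-1}=\hat{\mathcal{T}}_{m+1,m}\hat{\mathcal{T}}_{m,m-1}$. Sandwiching a complete set of orthogonal projectors $\hat{I}=\sum_{j}\hat{\mathcal{P}}_{m,j}$ between these two factors and pulling the resulting sum outside the inner product, each term of the numerator becomes the numerator of a $k$-time sequential weak value, while the denominator $\langle\psi_{f}|\hat{\mathcal{T}}_{f,i}|\psi_{i}\rangle$ is common to every term. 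Dividing through yields
\begin{equation}
(\mathcal{P}_{k},\ldots,\mathcal{P}_{m+1},\mathcal{P}_{m-1},\ldots,\mathcal{P}_{1})_{w}=\sum_{j}(\mathcal{P}_{k},\ldots,\mathcal{P}_{m+1},\mathcal{P}_{m,j},\mathcal{P}_{m-1},\ldots,\mathcal{P}_{1})_{w},
\end{equation}
i.e.\ the coarser weak value is a quantum superposition of finer ones.

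To match the first sentence of the theorem, I would then interpret this relation at the level of the history Hilbert space: dropping $t_m$ simply removes the tensor factor $\mathcal{H}_{m}$ from $\breve{\mathcal{H}}$, and each coarser history $\hat{\mathcal{Q}}'$ equals the sum of those finer histories that agree at all retained times (because $\sum_{j}\hat{\mathcal{P}}_{m,j}=\hat{I}$). Via Theorem~\ref{thm:main}, this history-level decomposition translates directly into the weak-value-level superposition above, and iteration recovers the general $k\to k'$ coarse-graining relationship. The main obstacle is essentially notational bookkeeping---keeping indices consistent and making the identity insertion transparent---rather than any deep technical step, since the entire content reduces to the group property of $\hat{\mathcal{T}}$ combined with the completeness of the projectors at the dropped time.
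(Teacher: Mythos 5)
Your proposal is correct and follows essentially the same route as the paper: both arguments insert a resolution of the identity $\hat{I}=\sum_{j}\hat{\mathcal{P}}_{m,j}$ at the dropped intermediate time and use linearity (via Theorem~\ref{thm:main} and the chain operators) to turn the coarser weak value into a sum of finer ones. Your version is stated for an arbitrary dropped time $t_{m}$ rather than the last intermediate time $t_{k}$ as in the paper, but this is only a notational generalization of the identical idea.
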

\begin{proof}
The sequential weak value $(\hat{\mathcal{P}}_{k-1},\ldots,\hat{\mathcal{P}}_{2},\hat{\mathcal{P}}_{1})_{w}$
corresponds to a $(k-1)$-time coarse-grained Feynman history $\hat{\mathcal{P}}(\psi_{f})\odot\hat{\mathcal{P}}_{k-1}\odot\ldots\odot\hat{\mathcal{P}}_{2}\odot\hat{\mathcal{P}}_{1}\odot\mathcal{\hat{P}}(\psi_{i}).$
The time tensor $\odot$ between projectors at $t_{f}$ and $t_{k-1}$
contains a hidden identity operator $\hat{I}_{k}$ at time $t_{k}$,
which when resolved as a sum of orthogonal projectors gives a quantum
superposition of $k$-time fine-grained Feynman histories $\hat{\mathcal{P}}(\psi_{f})\odot\hat{\mathcal{P}}_{k-1}\odot\ldots\odot\hat{\mathcal{P}}_{2}\odot\hat{\mathcal{P}}_{1}\odot\mathcal{\hat{P}}(\psi_{i})=\hat{\mathcal{P}}(\psi_{f})\odot\hat{I}_{k}\odot\hat{\mathcal{P}}_{k-1}\odot\ldots\odot\hat{\mathcal{P}}_{2}\odot\hat{\mathcal{P}}_{1}\odot\mathcal{\hat{P}}(\psi_{i})=\sum_{n}\hat{\mathcal{P}}(\psi_{f})\odot\hat{\mathcal{P}}_{(k,n)}\odot\hat{\mathcal{P}}_{k-1}\odot\ldots\odot\hat{\mathcal{P}}_{2}\odot\hat{\mathcal{P}}_{1}\odot\mathcal{\hat{P}}(\psi_{i})$.
Calculating the quantum probability amplitudes from the corresponding
chain operators and applying the weak value formula (Eq.~\ref{eq:main})
gives
\begin{eqnarray}
(\hat{\mathcal{P}}_{k-1},\ldots,\hat{\mathcal{P}}_{2},\hat{\mathcal{P}}_{1})_{w} & = & \sum_{n}(\hat{\mathcal{P}}_{(k,n)},\hat{\mathcal{P}}_{k-1},\ldots,\hat{\mathcal{P}}_{2},\hat{\mathcal{P}}_{1})_{w}\nonumber \\
 & = & (\hat{I}_{k},\hat{\mathcal{P}}_{k-1},\ldots,\hat{\mathcal{P}}_{2},\hat{\mathcal{P}}_{1})_{w}
\end{eqnarray}
\end{proof}
%%%
\begin{defn}
\emph{Incompatible weak values} are (sequential) weak values whose corresponding
quantum histories are not orthogonal in history Hilbert space.
\end{defn}
The main goal of Feynman sum-over-histories is to predict
probabilities for quantum events to occur. To obtain valid
quantum probabilities, however, the Feynman summation should not be performed
over all quantum histories in the history Hilbert space $\breve{\mathcal{H}}$,
but only over a complete set of orthogonal histories that
span $\breve{\mathcal{H}}$.
The non-orthogonal quantum histories
of incompatible weak values cannot interfere
%(either coherently or incoherently)
with each
other because this would overcount certain histories
in the Feynman sum more than once, rendering incorrect quantum probability
for the transition from $|\psi_{i}\rangle$ to $|\psi_{f}\rangle$ for almost all physically valid Hamiltonians.
Indeed, consider a complete set of quantum histories $\sum_s \hat{Q}_s=\mathcal{\hat{P}}(\psi_{f})\odot\hat{I}\odot\mathcal{\hat{P}}(\psi_{i})$ to which is added an extra non-orthogonal history $\hat{Q}_{s'}$. In the general case with $\psi_{s'}\neq0$, for coherent superposition, we will have
\begin{equation}
\left|\frac{\psi_{s'} + \sum_s \psi_{s}}{\sum_s \psi_{s}}\right|^2= \left|\frac{\psi_{s'}}{\sum_s \psi_{s}} + 1\right|^2\neq 1
\end{equation}
and for incoherent superposition
\begin{equation}
\frac{|\psi_{s'}|^2 + \sum_s |\psi_{s}|^2}{\sum_s |\psi_{s}|^2}= \frac{|\psi_{s'}|^2}{\sum_s |\psi_{s}|^2} + 1\neq 1
\end{equation}
Conservation of quantum probability will not be violated only in the special case where
%the extra non-orthogonal history contributes a zero quantum probability amplitude,
$\psi_{s'}=0$. Thus, one may be tempted to give a special status to non-orthogonal quantum histories with zero weak values and interpret them unconditionally. This, however, would contradict the mathematical principles that ensure the status of Feynman sum-over-histories as one of several equivalent formulations of quantum mechanics. In particular, notice that the orthogonality of quantum histories is independent of the Hamiltonian $\hat{H}$ and the correctly constructed Feynman sum $\sum_s \hat{Q}_s=\mathcal{\hat{P}}(\psi_{f})\odot\hat{I}\odot\mathcal{\hat{P}}(\psi_{i})$ will always return the correct transition amplitude $\sum_s \psi_s$ for any $\hat{H}$. On the other hand, having a zero quantum probability amplitude, $\psi_{s'}=0$, is a Hamiltonian-dependent condition, which means that
%an arbitrarily small change in the Hamiltonian leading to $\psi_{s'}\neq0$ (e.g. due to incomplete destructive interference) will immediately reveal the inconsistency of summing over non-orthogonal histories through non-conservation of probability. In other words,
summation over non-orthogonal histories cannot return the correct transition amplitudes for all physically valid Hamiltonians, hence it cannot be a fundamental principle upon which to build quantum mechanics.
%%%
\section{Application}
We illustrate the power of the presented theorems with the
analysis of a concrete interferometric setup shown in Fig.~\ref{fig:1}.
%%%
The transition probability amplitude $\psi_{(S\to D)}$ from the source $S$ to the detector $D$ can be easily calculated with the use of actual Feynman summation and various weak values can be determined with the use of Theorem \ref{thm:main}. Among the three alternative ways to calculate the Feynman sum, namely with the use of matrix exponential of the Hamiltonian $\hat{H}(t)$, time development operators $\hat{\mathcal{T}}_{k,k-1}$
 or Feynman propagators $F(\psi_{k}|\psi_{k-1})$, the latter one is computationally most effective.
%%%
Utilizing Theorem~\ref{thm:continuous}, there are only nine
coarse-grained continuous quantum histories $\hat{\mathcal{Q}}_{s}$ from
$S$ to $D$ that need to be summed over with their corresponding quantum probability amplitudes $\psi_{s}$:
\begin{eqnarray*}
\hat{\mathcal{Q}}_{1} & = & \hat{D}\odot\hat{x}_{1}\odot\hat{x}_{1}\odot\hat{x}_{1}\odot\hat{x}_{1}\odot\hat{x}_{1}\odot\hat{S},\quad\psi_{1}=+2^{-\frac{B_{n}+2}{2}}\\
\hat{\mathcal{Q}}_{2} & = & \hat{D}\odot\hat{x}_{9}\odot\hat{x}_{7}\odot\hat{x}_{5}\odot\hat{x}_{3}\odot\hat{x}_{2}\odot\hat{S},\quad\psi_{2}=+2^{-{3}}\\
\hat{\mathcal{Q}}_{3} & = & \hat{D}\odot\hat{x}_{9}\odot\hat{x}_{8}\odot\hat{x}_{5}\odot\hat{x}_{3}\odot\hat{x}_{2}\odot\hat{S},\quad\psi_{3}=-2^{-{3}}\\
\hat{\mathcal{Q}}_{4} & = & \hat{D}\odot\hat{x}_{9}\odot\hat{x}_{7}\odot\hat{x}_{6}\odot\hat{x}_{3}\odot\hat{x}_{2}\odot\hat{S},\quad\psi_{4}=-2^{-{3}}\\
\hat{\mathcal{Q}}_{5} & = & \hat{D}\odot\hat{x}_{9}\odot\hat{x}_{8}\odot\hat{x}_{6}\odot\hat{x}_{3}\odot\hat{x}_{2}\odot\hat{S},\quad\psi_{5}=-2^{-{3}}\\
\hat{\mathcal{Q}}_{6} & = & \hat{D}\odot\hat{x}_{9}\odot\hat{x}_{7}\odot\hat{x}_{5}\odot\hat{x}_{4}\odot\hat{x}_{2}\odot\hat{S},\quad\psi_{6}=+2^{-{3}}\\
\hat{\mathcal{Q}}_{7} & = & \hat{D}\odot\hat{x}_{9}\odot\hat{x}_{8}\odot\hat{x}_{5}\odot\hat{x}_{4}\odot\hat{x}_{2}\odot\hat{S},\quad\psi_{7}=-2^{-{3}}\\
\hat{\mathcal{Q}}_{8} & = & \hat{D}\odot\hat{x}_{9}\odot\hat{x}_{7}\odot\hat{x}_{6}\odot\hat{x}_{4}\odot\hat{x}_{2}\odot\hat{S},\quad\psi_{8}=+2^{-{3}}\\
\hat{\mathcal{Q}}_{9} & = & \hat{D}\odot\hat{x}_{9}\odot\hat{x}_{8}\odot\hat{x}_{6}\odot\hat{x}_{4}\odot\hat{x}_{2}\odot\hat{S},\quad\psi_{9}=+2^{-{3}}
\end{eqnarray*}
\begin{figure}[t]
\begin{centering}
\includegraphics[width=85mm]{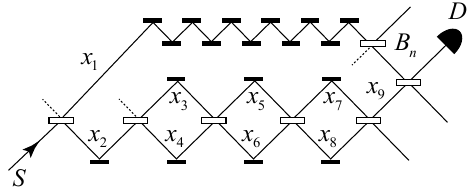}
\par\end{centering}
\caption{\label{fig:1}An interferometer with nine coarse-grained continuous quantum histories from the source $S$ to the detector $D$. The weak value $(x_{i})_{w}$ of the position projector $\hat{x}_{i}=|x_{i}\rangle\langle x_{i}|$ is the relative quantum probability amplitude of the sum of all histories that pass through $x_{i}$ divided by the sum of all histories from $S$ to $D$. $B_{n}$ is a variable number of beam splitters that can be used to reduce the quantum probability amplitude that reaches the detector~$D$ along history $x_{1}$.
%%%
Dashed lines indicate inactive sources of quanta that are required for the construction of the Hamiltonian $\hat{H}(t)$; solid lines indicate paths constructed as products of Feynman propagators.
%%%
}
\end{figure}
For $B_{n}=4$, the single-time weak value $(x_{1})_{w}=+1$ is able
to extract the quantum probability amplitude $\psi_{1}$ along
history $\hat{\mathcal{Q}}_{1}$, however, for
histories $\hat{\mathcal{Q}}_{2}$--$\hat{\mathcal{Q}}_{9}$ one needs
to use sequential weak values of multi-time projection operators that
uniquely identify each history inside the three inner interferometers:
$(x_{7},x_{5},x_{3})_{w}=+1$, $(x_{8},x_{5},x_{3})_{w}=-1$, $(x_{7},x_{6},x_{3})_{w}=-1$,
$(x_{8},x_{6},x_{3})_{w}=-1$, $(x_{7},x_{5},x_{4})_{w}=+1$, $(x_{8},x_{5},x_{4})_{w}=-1$,
$(x_{7},x_{6},x_{4})_{w}=+1$, $(x_{8},x_{6},x_{4})_{w}=+1$.

From Eq.~\ref{eq:main} it can be seen that once the fine-grained
quantum histories are resolved, adding projectors at extra times does
not change the weak values, e.g. $(x_{7},x_{5},x_{3})_{w}=(x_{7},x_{5},x_{3},x_{2})_{w}=(x_{9},x_{7},x_{5},x_{3})_{w}=(x_{9},x_{7},x_{5},x_{3},x_{2})_{w}=+1$.
On the other hand, reducing the number of projectors selects quantum superpositions of Feynman histories, e.g.:
\begin{eqnarray*}
(x_{2})_{w} & = & \frac{\psi_{2}+\psi_{3}+\psi_{4}+\psi_{5}+\psi_{6}+\psi_{7}+\psi_{8}+\psi_{9}}{\sum_{s}\psi_{s}}=0\\
(x_{3})_{w} & = & \frac{\psi_{2}+\psi_{3}+\psi_{4}+\psi_{5}}{\sum_{s}\psi_{s}}=-2\\
(x_{4})_{w} & = & \frac{\psi_{6}+\psi_{7}+\psi_{8}+\psi_{9}}{\sum_{s}\psi_{s}}=+2
\end{eqnarray*}
Thus, weak values are descriptive properties of the measured quantum
system that depend on the quantum history of interest (Theorem~\ref{thm:main}).
Feynman's sum-over-histories emphasizes the natural occurrence of pre- and post-selection in quantum mechanics. Moreover, it also reveals that in some sense sequential weak values are primitive and more fundamental than single-time weak values, which
are in fact superposed sums of sequential weak values, e.g.: $(x_{3})_{w}=(x_{7},x_{5},x_{3})_{w}+(x_{8},x_{5},x_{3})_{w}+(x_{7},x_{6},x_{3})_{w}+(x_{8},x_{6},x_{3})_{w}=+1-1-1-1=-2$.
This was similarly shown for multipartite weak values \cite{Aharonov2017b}.

Weak values measure different Feynman histories from the source $S$
to the detector $D$, but only sets of weak values that complete the
history Hilbert space can be consistently interpreted together. For
example, taken together $(x_{1})_{w}=+1$ and $(x_{2})_{w}=0$ state
that the quantum has reached the detector $D$ through $x_{1}$ but
not through $x_{2}$, and this is consistent because all fine-grained
histories $\hat{\mathcal{Q}}_{1}$--$\hat{\mathcal{Q}}_{9}$ are accounted
for. In contrast, when taken together $(x_{1})_{w}=+1$, $(x_{2})_{w}=0$,
$(x_{3})_{w}=-2$, $(x_{4})_{w}=+2$, $(x_{5})_{w}=0$, $(x_{6})_{w}=0$,
$(x_{7})_{w}=+2$ , $(x_{8})_{w}=-2$ and $(x_{9})_{w}=0$ state that
the quantum has not passed through $x_{2}$ and $x_{9}$, yet it has
been at $x_{3}$, $x_{4}$, $x_{7}$ and $x_{8}$; the apparent discontinuity
arises from overcounting five times each of $\psi_{2}$--$\psi_{9}$.
Thus, Theorem \ref{thm:seq-w-superposition} explicitly addresses
the controversy between Svensson and Vaidman \cite{Svensson2015,BenIsrael2017,Svensson2017} utilizing the general applicability of weak values for determining the history of a quantum system.

While weak values substantiate the physical nature of virtual Feynman
histories through measurable pointer shifts, the mathematical constraints
for correct Feynman summation elucidate the meaning and properties
of weak values. Sequential weak values reflect the unique character of temporal correlations, as was also shown by Avella \emph{et al.} \cite{Avella2017}. Consider as another example, an experimenter changing the
number of beam splitters from $B_{n}=4$ to $B_{n}=22$ on the history through $x_{1}$,
while measuring devices record the weak values at $x_{3}$ or $x_{4}$.
The presence of $18$ extra beamsplitters on arm $x_{1}$ is
felt by the weak measuring devices at arm $x_{3}$ or $x_{4}$ as
they measure the very large weak values $(x_{3})_{w}=-1024$ and
$(x_{4})_{w}=+1024$. In other words, the weak measurement devices at arms $x_{3}$ or $x_{4}$
somehow feel the photon exploration of alternative quantum histories \cite{Danan2013}.
Thus, the weak value measured through some weak coupling to a measuring pointer at one location integrates information about the presence of other devices at different locations in the interferometer through the change of the total Feynman sum $\sum_{s}\psi_{s}$.
Of course, weak values cannot be used for superluminal communication since to extract the weak values
from the recorded data, experimenters located at $x_{3}$ or $x_{4}$
need to know which photons were detected by~$D$.
\section{Concluding remarks}
Our results are consistent with a recent work by Sokolovski \cite{Sokolovski2016b},
but we have extended it in scope and generality. First, we have shown
that weak values should be interpreted for complete sets of quantum
histories, because they provide information for the phase difference
between any two histories in the complete set. Second, our Theorem~\ref{thm:main}
is completely general and gives the quantum probability amplitude
along any quantum history in terms of a corresponding sequential weak
value (Eq.~\ref{eq:main}), which reduces to a single-time weak value
in the special case of a history with a single intermediate time.
Third, in regard to the measurability of virtual Feynman histories,
our work builds upon previous results on measurability
of weak values \cite{Jozsa2007,Mitchison2007,Shikano2012,Svensson2013,Lu2014}. For a single-time weak value, the mean value of the pointer shift in the measuring device is proportional to the weak coupling factor $g\ll 1$
\cite{Jozsa2007,Shikano2012,Svensson2013,Lu2014}. For a multi-time sequential weak
value at $k$ times, the mean value of the pointer shift is
proportional to $g^{k}$ \cite{Mitchison2007}, which makes it equally
harder to measure the quantum probability amplitudes for the corresponding
multi-time Feynman histories. Furthermore, to evaluate the expectation value of a product of $N$ pointer positions, one needs in general not just the $N$-point sequential weak value, but also all other $n$-point ones, for $1 \le n \le N-1$. 
%%%
However, there is a clear way in principle for measuring sequential weak values: Initially, the measured projectors have to be weakly coupled to a set of ancillary pointers and then the correlation between pointers' states has to be projectively measured (see \hyperref[sec:Appendix]{Appendix}). This has been experimentally demonstrated in \cite{Piacentini2016}, where for each photon the sequential weak value of two projections on incompatible polarization states were measured through weak coupling to the transverse displacements. This method is also of practical importance, allowing to perform quantum state tomography \cite{Thekkadath2016} and quantum process tomography \cite{Ber2013}.

%%%
To conclude, we have presented and analyzed
the tight relation between Feynman's sum-over-histories and sequential
weak values and shown how one formalism corroborates the other, proving some new theorems.
%%%
This analysis may strengthen the fundamental role previously ascribed to weak values \cite{Vaidman1996,Vaidman2017,Dressel2014,Dressel2015,Williams2008,Pusey2014} and at the same time might make Feynman's histories more tangible, amenable to direct experimental observation.
%%%

\section*{Acknowledgements}
We wish to thank Yakir Aharonov, Bengt Svensson and Dmitri Sokolovski for helpful comments and discussions. We also thank Alexandre Matzkin, who suggested us some interesting literature to examine, and three anonymous referees for very helpful comments. E.C. was supported by the Canada Research Chairs (CRC) Program.

%%%
\section{\label{sec:Appendix}Appendix}
For making the paper self-contained, we outline below the theory of single-time and sequential weak values. These results are mostly known in literature, but they are vital for understanding our claims above and especially how sequential weak values can be measured in practice.

\subsection{\label{sec:MWVs}Measurement of single-time weak values}

For simplicity, the measuring device $M$ starts with a real-valued Gaussian position
wave function centered at zero
\begin{equation}
\phi(x)=\left(2\pi\sigma^{2}\right)^{-\frac{1}{4}}e^{-\frac{x^{2}}{4\sigma^{2}}}\label{eq:meter-0}
\end{equation}
which gives a corresponding Gaussian distribution
\begin{equation}
\phi^{2}(x)=\frac{1}{\sqrt{2\pi\sigma^{2}}}e^{-\frac{(x-\mu)^{2}}{2\sigma^{2}}}
\end{equation}
with position mean $\mu_{0}(x)=0$ and variance $\sigma^2_{0}(x)=\sigma^2$.

The interaction Hamiltonian between the measured system $S$ and the
measuring device $M$ is
\begin{equation}
\hat{H}_{\textrm{int}}=g\delta(t-t_{m})\,\hat{A}\otimes\hat{p}
\end{equation}
where $\hat{A}$ is an observable for the measured system $S$ and
$\hat{p}=\hbar\hat{k}=-\imath\hbar\frac{\partial}{\partial x}$ is
the meter variable conjugate to the meter pointer variable $\hat{x}$.
Allowing the measured system $S$ to evolve with internal Hamiltonian
$\hat{H}_{S}\otimes\hat{I}$ and suppressing the internal Hamiltonian
of the meter $\hat{I}\otimes\hat{H}_{M}=0$, we obtain for the composite
time evolution operator
\begin{eqnarray}
\hat{\mathcal{T}}_{\textrm{composite}} & = & e^{-\frac{\imath}{\hbar}\int_{t_{i}}^{t_{f}}\left[\hat{H}_{S}\otimes\hat{I}+g\delta(t-t_{m})\hat{A}\otimes\hat{p}\right]dt}\nonumber \\
 & = & e^{-\frac{\imath}{\hbar}\hat{H}_{S}\otimes\hat{I}\,(t_{f}-t_{m})}e^{-\frac{\imath}{\hbar}g\,\hat{A}\otimes\hat{p}}e^{-\frac{\imath}{\hbar}\hat{H}_{S}\otimes\hat{I}\,(t_{m}-t_{i})}\nonumber \\
 & = & \hat{\mathcal{T}}_{f,m}\,e^{-\frac{\imath}{\hbar}g\,\hat{A}\otimes\hat{p}}\hat{\mathcal{T}}_{m,i}\label{eq:evol-1}
\end{eqnarray}
Hereafter, we will use $\hat{\mathcal{T}}_{k,k-1}=e^{-\frac{\imath}{\hbar}\hat{H}_{S}\otimes\hat{I}\,(t_{k}-t_{k-1})}$
to compress the internal time evolution operators of the measured
system $S$.

\subsubsection{Real part of weak value}

The composite system starts from initial state
\begin{equation}
|\psi_{i}\rangle|\phi\rangle=|\psi_{i}\rangle\int_{-\infty}^{\infty}\phi(x)|x\rangle dx
\end{equation}
and evolves with the time evolution operator in Eq. \ref{eq:evol-1}.
Due to small $g$ satisfying $g\ll1$, we can use a truncated power
series at $\mathcal{O}(g^{3})$ for the interaction term. For post-selected
system in a final state $|\psi_{f}\rangle$, the final meter wave
function in position basis is
\begin{eqnarray}
 & & \langle x|\phi_{f}\rangle=\langle\psi_{f}|\hat{\mathcal{T}}_{\textrm{composite}}|\psi_{i}\rangle\phi(x)\nonumber \\
 & & \approx\langle\psi_{f}|\hat{\mathcal{T}}_{f,m}\left(1-\imath g\hat{A}\otimes\hat{k}-\frac{g^{2}}{2}\hat{A}^{2}\otimes\hat{k}^{2}\right)\hat{\mathcal{T}}_{m,i}|\psi_{i}\rangle\phi(x)\nonumber \\
\end{eqnarray}
where we used $\langle x|x^{\prime}\rangle=\delta(x-x^{\prime})$
and the integral property of Dirac's delta function $\int_{-\infty}^{\infty}\phi(x^{\prime})\delta(x-x^{\prime})dx^{\prime}=\phi(x)$.

Expressing the wave number operator in position basis $\hat{k}=-\imath\frac{\partial}{\partial x}$
and using Lagrange's notation for spatial partial derivatives gives
\begin{eqnarray}
\phi_{f} & \approx & \langle\psi_{f}|\psi_{i}\rangle\left(\phi-gA_{w}\phi^{\prime}+\frac{g^{2}}{2}(A^{2})_{w}\phi^{\prime\prime}\right)
\end{eqnarray}
The normalized final meter distribution is
\begin{eqnarray}
\frac{|\phi_{f}|^{2}}{|\langle\psi_{f}|\psi_{i}\rangle|^{2}} & \approx & \left|\phi-gA_{w}\phi^{\prime}+\frac{g^{2}}{2}(A^{2})_{w}\phi^{\prime\prime}\right|^{2}\nonumber \\
 & \approx & \phi^{2}-g\left[\overline{A_{w}}+A_{w}\right]\phi\phi^{\prime}+g^{2}A_{w}\overline{A_{w}}\phi^{\prime}\phi^{\prime}\nonumber \\
 & & +\frac{g^{2}}{2}\left[\overline{(A^{2})_{w}}+(A^{2})_{w}\right]\phi\phi^{\prime\prime}\label{eq:final-1}
\end{eqnarray}
The mean (expected value of position) of the normalized final meter
distribution $|\phi_{f}|^{2}$ is calculated as the first raw moment
\begin{equation}
\langle x\rangle=\frac{\int_{-\infty}^{\infty}x\,|\phi_{f}|^{2}dx}{|\langle\psi_{f}|\psi_{i}\rangle|^{2}}\label{eq:ave}
\end{equation}
Taking into account the exact initial meter wavefunction in Eq. \ref{eq:meter-0},
which is real and centered at zero, we have
\begin{eqnarray}
\int_{-\infty}^{\infty}x\,\left(\phi^{2}\right)dx & = & 0\label{eq:raw-x-1}\\
\int_{-\infty}^{\infty}x\,\left(\phi\phi^{\prime}\right)dx & = & -\frac{1}{2}\label{eq:raw-x-2}\\
\int_{-\infty}^{\infty}x\,\left(\phi^{\prime}\phi^{\prime}\right)dx & = & 0\label{eq:raw-x-3}\\
\int_{-\infty}^{\infty}x\,\left(\phi\phi^{\prime\prime}\right)dx & = & 0\label{eq:raw-x-4}
\end{eqnarray}
With the above equations, from Eqs. \ref{eq:final-1} and \ref{eq:ave},
we get
\begin{eqnarray}
\langle x\rangle & = & \frac{g}{2}\left[\overline{A_{w}}+A_{w}\right]=g\,\textrm{Re}(A_{w})
\end{eqnarray}
So the mean value of final meter distribution in position basis $x$
measures the real part of the weak value $A_{w}$.

\subsubsection{Imaginary part of weak value}

Fourier transform of the initial meter position quantum wave function
to wave number basis $|k\rangle$ gives
\begin{equation}
|\psi_{i}\rangle|\phi\rangle=|\psi_{i}\rangle\int_{-\infty}^{\infty}\tilde{\phi}(k)|k\rangle dk
\end{equation}
Using a truncated power series at $\mathcal{O}(g^{3})$ for the interaction
term, we obtain for the final meter state
\begin{eqnarray}
 & & \langle k|\phi_{f}\rangle=\langle\psi_{f}|\hat{\mathcal{T}}_{\textrm{composite}}|\psi_{i}\rangle\tilde{\phi}(k)\nonumber \\
 & & \approx\langle\psi_{f}|\hat{\mathcal{T}}_{f,m}\left(1-\imath g\,\hat{A}\otimes\hat{k}-\frac{g^{2}}{2}\hat{A}^{2}\otimes\hat{k}^{2}\right)\hat{\mathcal{T}}_{m,i}|\psi_{i}\rangle\tilde{\phi}(k)\nonumber \\
\end{eqnarray}
and the final meter wave function in wave number basis

\begin{equation}
\tilde{\phi}_{f}\approx\langle\psi_{f}|\psi_{i}\rangle\tilde{\phi}\Bigg[1-\imath gA_{w}k-\frac{1}{2}g^{2}(A^{2})_{w}k^{2}\Bigg]
\end{equation}
where the initial Gaussian wave number wave function is real and centered
at zero
\begin{equation}
\tilde{\phi}(k)=\left(\frac{2\sigma^{2}}{\pi}\right)^{\frac{1}{4}}e^{-k^{2}\sigma^{2}}
\end{equation}
and the corresponding initial wave number probability distribution
\begin{equation}
\tilde{\phi}^{2}(k)=\sqrt{\frac{2}{\pi}}\sigma e^{-2k^{2}\sigma^{2}}
\end{equation}
is centered at $\mu_{0}(k)=0$ and has a variance $\sigma^2_{0}(k)=\frac{1}{4\sigma^{2}}$.

The normalized final meter distribution is
\begin{eqnarray}
\frac{|\tilde{\phi}_{f}|^{2}}{|\langle\psi_{f}|\psi_{i}\rangle|^{2}} & \approx & \left|\tilde{\phi}\Bigg[1-\imath gA_{w}k-\frac{g^{2}}{2}(A^{2})_{w}k^{2}\Bigg]\right|^{2}\nonumber \\
 & \approx & \tilde{\phi}^{2}\Bigg[1+\imath g\Big[\overline{A_{w}}-A_{w}\Big]k\nonumber \\
 & & -\frac{g^{2}}{2}\Big[\overline{(A^{2})_{w}}+(A^{2})_{w}-2A_{w}\overline{A_{w}}\Big]k^{2}\Bigg]\nonumber \\
\label{eq:final-2}
\end{eqnarray}
With the use of the following identities
\begin{eqnarray}
\int_{-\infty}^{\infty}k\,\left(\tilde{\phi}^{2}\right)dk & = & 0\label{eq:raw-k-1}\\
\int_{-\infty}^{\infty}k\,\left(k\tilde{\phi}^{2}\right)dk & = & \frac{1}{4\sigma^{2}}\label{eq:raw-k-2}\\
\int_{-\infty}^{\infty}k\,\left(k^{2}\tilde{\phi}^{2}\right)dk & = & 0\label{eq:raw-k-3}
\end{eqnarray}
from Eq. \ref{eq:final-2} we obtain that the mean value of the wave
number probability distribution is shifted from zero to
\begin{eqnarray}
\langle k\rangle & = & \frac{\imath g\left[\overline{A_{w}}-A_{w}\right]}{4\sigma^{2}}=\frac{g\,\textrm{Im}(A_{w})}{2\sigma^{2}}
\end{eqnarray}

\subsection{\label{sec:MSWVs}Measurement of two-time sequential weak values}

Consider two meter probes measuring two different observables $\hat{A}_{1}$
and $\hat{A}_{2}$ at two different times $t_{1}$ and $t_{2}$. The
interaction Hamiltonian between the measured system $S$ and the measuring
devices $M_{1}$ and $M_{2}$ is
\begin{equation}
\hat{H}_{\textrm{int}}=g\delta(t-t_{2})\,\hat{A}_{2}\otimes\hat{p}_{2}+g\delta(t-t_{1})\,\hat{A}_{1}\otimes\hat{p}_{1}
\end{equation}

The time evolution operator is
\begin{eqnarray}
\hat{\mathcal{T}}_{\textrm{composite}} & = & e^{-\frac{\imath}{\hbar}\int_{t_{i}}^{t_{f}}\left[\hat{H}_{S}\otimes\hat{I}+g\delta(t-t_{2})\,\hat{A}_{2}\otimes\hat{p}_{2}+g\delta(t-t_{1})\,\hat{A}_{1}\otimes\hat{p}_{1}\right]dt}\nonumber \\
 & = & \hat{\mathcal{T}}_{f,2}\,e^{-\frac{\imath}{\hbar}g\,\hat{A}_{2}\otimes\hat{p}_{2}}\hat{\mathcal{T}}_{2,1}e^{-\frac{\imath}{\hbar}g\,\hat{A}_{1}\otimes\hat{p}_{1}}\hat{\mathcal{T}}_{1,i}\label{eq:evol-2}
\end{eqnarray}

\subsubsection{Real part of sequential weak value}

\emph{Product $\langle x_{1}x_{2}\rangle$.}
Measuring both meter probes in $x$-basis $\langle x_{1}x_{2}\rangle$
extracts the real part of the sequential weak value plus an extra
term.

The composite system starts from initial state
\begin{equation}
|\psi_{i}\rangle|\phi_{1}\rangle|\phi_{2}\rangle=|\psi_{i}\rangle\int_{-\infty}^{\infty}\phi(x_{1})|x_{1}\rangle dx\int_{-\infty}^{\infty}\phi(x_{2})|x_{2}\rangle dx_{2}
\end{equation}
and evolves with the time evolution operator in Eq. \ref{eq:evol-2}.
Due to small $g$ satisfying $g\ll1$, we can use a truncated power
series at $\mathcal{O}(g^{3})$ for the interaction term. For post-selected
system in a final state $|\psi_{f}\rangle$, the final two-meter wave
function in position basis, $\hat{k}_{1}=-\imath\frac{\partial}{\partial x_{1}}$,
$\hat{k}_{2}=-\imath\frac{\partial}{\partial x_{2}}$, is
\begin{eqnarray}
\langle x_{1}|\langle x_{2}|\Phi_{f}\rangle & = & \langle\psi_{f}|\hat{\mathcal{T}}_{\textrm{composite}}|\psi_{i}\rangle\phi(x_{1})\phi(x_{2})\nonumber \\
 & \approx & \langle\psi_{f}|\hat{\mathcal{T}}_{f,2}\left(1-g\hat{A}_{2}\frac{\partial}{\partial x_{2}}+\frac{g^{2}}{2}\hat{A}_{2}^{2}\frac{\partial^{2}}{\partial x_{2}^{2}}\right)\nonumber \\
 & & \times\hat{\mathcal{T}}_{2,1}\left(1-g\hat{A}_{1}\frac{\partial}{\partial x_{1}}+\frac{g^{2}}{2}\hat{A}_{1}^{2}\frac{\partial^{2}}{\partial x_{1}^{2}}\right)\hat{\mathcal{T}}_{1,i}\nonumber \\
 & & \times\,|\psi_{i}\rangle\phi(x_{1})\phi(x_{2})
\end{eqnarray}
Multiplying the brackets and discarding $\mathcal{O}(g^{3})$ terms
gives
\begin{eqnarray}
\Phi_{f} & \approx & \langle\psi_{f}|\psi_{i}\rangle\Big[\phi_{1}\phi_{2}-g(A_{1})_{w}\phi_{1}^{\prime}\phi_{2}\nonumber \\
 & & -g(A_{2})_{w}\phi_{1}\phi_{2}^{\prime}+\frac{g^{2}}{2}(A_{1}^{2})_{w}\phi_{1}^{\prime\prime}\phi_{2}\nonumber \\
 & & +\frac{g^{2}}{2}(A_{2}^{2})_{w}\phi_{1}\phi_{2}^{\prime\prime}+g^{2}(A_{2},A_{1})_{w}\phi_{1}^{\prime}\phi_{2}^{\prime}\Big]
\end{eqnarray}
The normalized final meter distribution is
\begin{eqnarray}
\frac{|\Phi_{f}|^{2}}{|\langle\psi_{f}|\psi_{i}\rangle|^{2}} & \approx & \phi_{1}^{2}\phi_{2}^{2}-g\left[\overline{(A_{1})_{w}}+(A_{1})_{w}\right]\phi_{1}\phi_{1}^{\prime}\phi_{2}^{2}\nonumber \\
 & & -g\left[\overline{(A_{2})_{w}}+(A_{2})_{w}\right]\phi_{1}^{2}\phi_{2}\phi_{2}^{\prime}\nonumber \\
 & & +\frac{g^{2}}{2}\left[\overline{(A_{1}^{2})_{w}}+(A_{1}^{2})_{w}\right]\phi_{1}\phi_{1}^{\prime\prime}\phi_{2}^{2}\nonumber \\
 & & +\frac{g^{2}}{2}\left[\overline{(A_{2}^{2})_{w}}+(A_{2}^{2})_{w}\right]\phi_{1}^{2}\phi_{2}\phi_{2}^{\prime\prime}\nonumber \\
 & & +g^{2}\left[\overline{(A_{2},A_{1})_{w}}+(A_{2},A_{1})_{w}\right]\phi_{1}\phi_{1}^{\prime}\phi_{2}\phi_{2}^{\prime}\nonumber \\
 & & +g^{2}\left[\overline{(A_{1})_{w}}(A_{2})_{w}+(A_{1})_{w}\overline{(A_{2})_{w}}\right]\phi_{1}\phi_{1}^{\prime}\phi_{2}\phi_{2}^{\prime}\nonumber \\
 & & +g^{2}(A_{1})_{w}\overline{(A_{1})_{w}}\phi_{1}^{\prime}\phi_{1}^{\prime}\phi_{2}^{2}\nonumber \\
 & & +g^{2}(A_{2})_{w}\overline{(A_{2})_{w}}\phi_{1}^{2}\phi_{2}^{\prime}\phi_{2}^{\prime}
\end{eqnarray}
With the use of the identities (\ref{eq:raw-x-1}--\ref{eq:raw-x-4}),
we get
\begin{eqnarray}
\langle x_{1}x_{2}\rangle & = & \int_{-\infty}^{\infty}\int_{-\infty}^{\infty}x_{1}x_{2}\left(\frac{|\Phi_{f}|^{2}}{|\langle\psi_{f}|\psi_{i}\rangle|^{2}}\right)dx_{1}dx_{2}\nonumber \\
 & = & \frac{g^{2}}{4}\left[\overline{(A_{2},A_{1})_{w}}+(A_{2},A_{1})_{w}\right]\nonumber \\
 & & +\frac{g^{2}}{4}\left[\overline{(A_{1})_{w}}(A_{2})_{w}+(A_{1})_{w}\overline{(A_{2})_{w}}\right]\nonumber \\
 & = & \frac{g^{2}}{2}\textrm{Re}\left[(A_{2},A_{1})_{w}+\overline{(A_{1})_{w}}(A_{2})_{w}\right]\label{eq:swv-xx}
\end{eqnarray}

\emph{Product $\langle k_{1}k_{2}\rangle$.}
Measuring both meter probes in $k$-basis $\langle k_{1}k_{2}\rangle$
extracts the real part of the sequential weak value with a negative
sign plus an extra term.

The Fourier transform of the initial composite state is
\begin{equation}
|\psi_{i}\rangle|\phi_{1}\rangle|\phi_{2}\rangle=|\psi_{i}\rangle\int_{-\infty}^{\infty}\tilde{\phi}(k_{1})|k_{1}\rangle dk_{1}\int_{-\infty}^{\infty}\tilde{\phi}(k_{2})|k_{2}\rangle dk_{2}
\end{equation}
Using a truncated power series at $\mathcal{O}(g^{3})$ for the interaction
term, we obtain for the final meter state
\begin{eqnarray}
\langle k_{1}|\langle k_{2}|\tilde{\Phi}_{f}\rangle & = & \langle\psi_{f}|\hat{\mathcal{T}}_{\textrm{composite}}|\psi_{i}\rangle\tilde{\phi}(k_{1})\tilde{\phi}(k_{2})\nonumber \\
 & \approx & \langle\psi_{f}|\hat{\mathcal{T}}_{f,2}\left(1-\imath g\hat{A}_{2}\otimes\hat{k}_{2}-\frac{g^{2}}{2}\hat{A}_{2}^{2}\otimes\hat{k}_{2}^{2}\right)\nonumber \\
 & & \times\hat{\mathcal{T}}_{2,1}\left(1-\imath g\hat{A}_{1}\otimes\hat{k}_{1}-\frac{g^{2}}{2}\hat{A}_{1}^{2}\otimes\hat{k}_{1}^{2}\right)\hat{\mathcal{T}}_{1,i}\nonumber \\
 & & \times\,|\psi_{i}\rangle\tilde{\phi}(k_{1})\tilde{\phi}(k_{2})
\end{eqnarray}
which gives
\begin{eqnarray}
\tilde{\Phi}_{f} & \approx & \langle\psi_{f}|\psi_{i}\rangle\tilde{\phi}_{1}\tilde{\phi}_{2}\Bigg[1-\imath g(A_{1})_{w}k_{1}\nonumber \\
 & & -\imath g(A_{2})_{w}k_{2}-\frac{g^{2}}{2}(A_{1}^{2})_{w}k_{1}^{2}\nonumber \\
 & & -\frac{g^{2}}{2}(A_{2}^{2})_{w}k_{2}^{2}-g^{2}(A_{2},A_{1})_{w}k_{1}k_{2}\Bigg]\nonumber \\
\end{eqnarray}
The normalized final meter distribution is
\begin{eqnarray}
\frac{|\tilde{\Phi}_{f}|^{2}}{|\langle\psi_{f}|\psi_{i}\rangle|^{2}} & \approx & \tilde{\phi}_{1}^{2}\tilde{\phi}_{2}^{2}\Bigg[1+\imath g\left[\overline{(A_{1})_{w}}-(A_{1})_{w}\right]k_{1}\nonumber \\
 & & +\imath g\left[\overline{(A_{2})_{w}}-(A_{2})_{w}\right]k_{2}\nonumber \\
 & & -g^{2}\left[\overline{(A_{2},A_{1})_{w}}+(A_{2},A_{1})_{w}\right]k_{1}k_{2}\nonumber \\
 & & +g^{2}\left[\overline{(A_{1})_{w}}(A_{2})_{w}+(A_{1})_{w}\overline{(A_{2})_{w}}\right]k_{1}k_{2}\nonumber \\
 & & -\frac{g^{2}}{2}\left[\overline{(A_{1}^{2})_{w}}+(A_{1}^{2})_{w}-2(A_{1})_{w}\overline{(A_{1})_{w}}\right]k_{1}^{2}\nonumber \\
 & & -\frac{g^{2}}{2}\left[\overline{(A_{2}^{2})_{w}}+(A_{2}^{2})_{w}-2(A_{2})_{w}\overline{(A_{2})_{w}}\right]k_{2}^{2}\Bigg]\nonumber \\
\end{eqnarray}
With the use of the identities (\ref{eq:raw-k-1}--\ref{eq:raw-k-3}),
we get
\begin{eqnarray}
\langle k_{1}k_{2}\rangle & = & \int_{-\infty}^{\infty}\int_{-\infty}^{\infty}k_{1}k_{2}\left(\frac{|\tilde{\Phi}_{f}|^{2}}{|\langle\psi_{f}|\psi_{i}\rangle|^{2}}\right)dk_{1}dk_{2}\nonumber \\
 & = & -\frac{g^{2}}{16\sigma^{4}}\left[\overline{(A_{2},A_{1})_{w}}+(A_{2},A_{1})_{w}\right]\nonumber \\
 & & +\frac{g^{2}}{16\sigma^{4}}\left[\overline{(A_{1})_{w}}(A_{2})_{w}+(A_{1})_{w}\overline{(A_{2})_{w}}\right]\nonumber \\
 & = & \frac{g^{2}}{8\sigma^{4}}\textrm{Re}\left[-(A_{2},A_{1})_{w}+\overline{(A_{1})_{w}}(A_{2})_{w}\right]\label{eq:swv-kk}
\end{eqnarray}
Subtracting suitably scaled $\langle x_{1}x_{2}\rangle$ and $\langle k_{1}k_{2}\rangle$
gives only the real part of the two-time sequential weak value without
the extra terms of individual measurements
\begin{equation}
\textrm{Re}\left[(A_{2},A_{1})_{w}\right]=\frac{1}{g^{2}}\left(\langle x_{1}x_{2}\rangle-4\sigma^{4}\langle k_{1}k_{2}\rangle\right)\label{eq:SWV-Re}
\end{equation}

\subsubsection{Imaginary part of sequential weak value}

To extract the imaginary part of the sequential weak value, we need
to use mixed products. Again, there will be extra terms that need
to be subtracted.

\emph{Product $\langle x_{1}k_{2}\rangle$.}
To calculate $\langle x_{1}k_{2}\rangle$, we rewrite the initial
state of the composite system in a mixed product form
\begin{equation}
|\psi_{i}\rangle|\phi_{1}\rangle|\phi_{2}\rangle=|\psi_{i}\rangle\int_{-\infty}^{\infty}\phi(x_{1})|x_{1}\rangle dx\int_{-\infty}^{\infty}\tilde{\phi}(k_{2})|k_{2}\rangle dk_{2}
\end{equation}
Due to small $g$ satisfying $g\ll1$, we can use a truncated power
series at $\mathcal{O}(g^{3})$ for the interaction term. For post-selected
system in a final state $|\psi_{f}\rangle$, the final two-meter wave
function in position basis is
\begin{eqnarray}
\langle x_{1}|\langle k_{2}|\Phi_{f}\rangle & = & \langle\psi_{f}|\hat{\mathcal{T}}_{\textrm{composite}}|\psi_{i}\rangle\phi(x_{1})\tilde{\phi}(k_{2})\nonumber \\
 & \approx & \langle\psi_{f}|\hat{\mathcal{T}}_{f,2}\left(1-\imath g\hat{A}_{2}\otimes\hat{k}_{2}-\frac{g^{2}}{2}\hat{A}_{2}^{2}\otimes\hat{k}_{2}^{2}\right)\nonumber \\
 & & \times\hat{\mathcal{T}}_{2,1}\left(1-g\hat{A}_{1}\frac{\partial}{\partial x_{1}}+\frac{g^{2}}{2}\hat{A}_{1}^{2}\frac{\partial^{2}}{\partial x_{1}^{2}}\right)\hat{\mathcal{T}}_{1,i}\nonumber \\
 & & \times\,|\psi_{i}\rangle\phi(x_{1})\tilde{\phi}(k_{2})
\end{eqnarray}
Multiplying the brackets and discarding $\mathcal{O}(g^{3})$ terms
gives
\begin{eqnarray}
\Phi_{f} & \approx & \langle\psi_{f}|\psi_{i}\rangle\tilde{\phi}_{2}\Bigg[\phi_{1}-g(A_{1})_{w}\phi_{1}^{\prime}-\imath g(A_{2})_{w}k_{2}\phi_{1}\nonumber \\
 & & +\frac{g^{2}}{2}(A_{1}^{2})_{w}\phi_{1}^{\prime\prime}-\frac{g^{2}}{2}(A_{2}^{2})_{w}k_{2}^{2}\phi_{1}\nonumber \\
 & & +\imath g^{2}(A_{2},A_{1})_{w}k_{2}\phi_{1}^{\prime}\Bigg]
\end{eqnarray}
The normalized final meter distribution is
\begin{eqnarray}
\frac{|\Phi_{f}|^{2}}{|\langle\psi_{f}|\psi_{i}\rangle|^{2}} & \approx & \tilde{\phi}_{2}\Bigg[\phi_{1}^{2}-g\left[\overline{(A_{1})_{w}}+(A_{1})_{w}\right]\phi_{1}\phi_{1}^{\prime}\nonumber \\
 & & +\imath g\left[\overline{(A_{2})_{w}}-(A_{2})_{w}\right]k_{2}\phi_{1}^{2}\nonumber \\
 & & +g^{2}(A_{1})_{w}\overline{(A_{1})_{w}}\phi_{1}^{\prime}\phi_{1}^{\prime}+g^{2}(A_{2})_{w}\overline{(A_{2})_{w}}k_{2}^{2}\phi_{1}^{2}\nonumber \\
 & & -\imath g^{2}\left[\overline{(A_{2},A_{1})_{w}}-(A_{2},A_{1})_{w}\right]k_{2}\phi_{1}\phi_{1}^{\prime}\nonumber \\
 & & +\imath g^{2}\left[\overline{(A_{1})_{w}}(A_{2})_{w}-(A_{1})_{w}\overline{(A_{2})_{w}}\right]k_{2}\phi_{1}\phi_{1}^{\prime}\nonumber \\
 & & +\frac{g^{2}}{2}\left[\overline{(A_{1}^{2})_{w}}+(A_{1}^{2})_{w}\right]\phi_{1}\phi_{1}^{\prime\prime}\nonumber \\
 & & -\frac{g^{2}}{2}\left[(A_{2}^{2})_{w}+\overline{(A_{2}^{2})_{w}}\right]k_{2}^{2}\phi_{1}^{2}\Bigg]
\end{eqnarray}
With the use of the identities (\ref{eq:raw-x-1}--\ref{eq:raw-x-4}
and \ref{eq:raw-k-1}--\ref{eq:raw-k-3}), we get
\begin{eqnarray}
\langle x_{1}k_{2}\rangle & = & \int_{-\infty}^{\infty}\int_{-\infty}^{\infty}x_{1}k_{2}\left(\frac{|\Phi_{f}|^{2}}{|\langle\psi_{f}|\psi_{i}\rangle|^{2}}\right)dx_{1}dk_{2}\nonumber \\
 & = & \imath\frac{g^{2}}{8\sigma^{2}}\left[\overline{(A_{2},A_{1})_{w}}-(A_{2},A_{1})_{w}\right]\nonumber \\
 & & -\imath\frac{g^{2}}{8\sigma^{2}}\left[\overline{(A_{1})_{w}}(A_{2})_{w}-(A_{1})_{w}\overline{(A_{2})_{w}}\right]\nonumber \\
 & = & \frac{g^{2}}{4\sigma^{2}}\textrm{Im}\left[(A_{2},A_{1})_{w}-(A_{1})_{w}\overline{(A_{2})_{w}}\right]\label{eq:swv-xk}
\end{eqnarray}

\emph{Product $\langle k_{1}x_{2}\rangle$.}
To calculate $\langle k_{1}x_{2}\rangle$, we express the initial
composite state as the mixed product
\begin{equation}
|\psi_{i}\rangle|\phi_{1}\rangle|\phi_{2}\rangle=|\psi_{i}\rangle\int_{-\infty}^{\infty}\tilde{\phi}(k_{1})|k_{1}\rangle dk_{1}\int_{-\infty}^{\infty}\phi(x_{2})|x_{2}\rangle dx_{2}
\end{equation}
Using a truncated power series at $\mathcal{O}(g^{3})$ for the interaction
term, we obtain for the final meter state
\begin{eqnarray}
\langle k_{1}|\langle x_{2}|\Phi_{f}\rangle & = & \langle\psi_{f}|\hat{\mathcal{T}}_{\textrm{composite}}|\psi_{i}\rangle\tilde{\phi}(k_{1})\phi(x_{2})\nonumber \\
 & \approx & \langle\psi_{f}|\hat{\mathcal{T}}_{f,2}\left(1-g\hat{A}_{2}\frac{\partial}{\partial x_{2}}+\frac{g^{2}}{2}\hat{A}_{2}^{2}\frac{\partial^{2}}{\partial x_{2}^{2}}\right)\nonumber \\
 & & \times\hat{\mathcal{T}}_{2,1}\left(1-\imath g\hat{A}_{1}\otimes\hat{k}_{1}-\frac{g^{2}}{2}\hat{A}_{1}^{2}\otimes\hat{k}_{1}^{2}\right)\hat{\mathcal{T}}_{1,i}\nonumber \\
 & & \times\,|\psi_{i}\rangle\tilde{\phi}(k_{1})\phi(x_{2})
\end{eqnarray}
which gives
\begin{eqnarray}
\Phi_{f} & \approx & \langle\psi_{f}|\psi_{i}\rangle\tilde{\phi}_{1}\Bigg[\phi_{2}-\imath g(A_{1})_{w}k_{1}\phi_{2}-g(A_{2})_{w}\phi_{2}^{\prime}\nonumber \\
 & & -\frac{g^{2}}{2}(A_{1}^{2})_{w}k_{1}^{2}\phi_{2}+\frac{g^{2}}{2}(A_{2}^{2})_{w}\phi_{2}^{\prime\prime}\nonumber \\
 & & +\imath g^{2}(A_{2},A_{1})_{w}k_{1}\phi_{2}^{\prime}\Bigg]
\end{eqnarray}
The normalized final meter distribution is
\begin{eqnarray}
\frac{|\Phi_{f}|^{2}}{|\langle\psi_{f}|\psi_{i}\rangle|^{2}} & \approx & \tilde{\phi}_{1}^{2}\Bigg[\phi_{2}^{2}+\imath g\left[\overline{(A_{1})_{w}}-(A_{1})_{w}\right]k_{1}\phi_{2}^{2}\nonumber \\
 & & -g\left[\overline{(A_{2})_{w}}+(A_{2})_{w}\right]\phi_{2}\phi_{2}^{\prime}\nonumber \\
 & & +g^{2}(A_{1})_{w}\overline{(A_{1})_{w}}k_{1}^{2}\phi_{2}^{2}+g^{2}(A_{2})_{w}\overline{(A_{2})_{w}}\phi_{2}^{\prime}\phi_{2}^{\prime}\nonumber \\
 & & -\imath g^{2}\left[\overline{(A_{2},A_{1})_{w}}-(A_{2},A_{1})_{w}\right]k_{1}\phi_{2}\phi_{2}^{\prime}\nonumber \\
 & & -\imath g^{2}\left[\overline{(A_{1})_{w}}(A_{2})_{w}-(A_{1})_{w}\overline{(A_{2})_{w}}\right]k_{1}\phi_{2}\phi_{2}^{\prime}\nonumber \\
 & & -\frac{g^{2}}{2}\left[\overline{(A_{1}^{2})_{w}}+(A_{1}^{2})_{w}\right]k_{1}^{2}\phi_{2}^{2}\nonumber \\
 & & +\frac{g^{2}}{2}\left[\overline{(A_{2}^{2})_{w}}+(A_{2}^{2})_{w}\right]\phi_{2}\phi_{2}^{\prime\prime}
\end{eqnarray}
With the use of the identities (\ref{eq:raw-k-1}--\ref{eq:raw-k-3}),
we get
\begin{eqnarray}
\langle k_{1}x_{2}\rangle & = & \int_{-\infty}^{\infty}\int_{-\infty}^{\infty}k_{1}x_{2}\left(\frac{|\Phi_{f}|^{2}}{|\langle\psi_{f}|\psi_{i}\rangle|^{2}}\right)dk_{1}dx_{2}\nonumber \\
 & = & \imath\frac{g^{2}}{8\sigma^{2}}\left[\overline{(A_{2},A_{1})_{w}}-(A_{2},A_{1})_{w}\right]\nonumber \\
 & & +\imath\frac{g^{2}}{8\sigma^{2}}\left[\overline{(A_{1})_{w}}(A_{2})_{w}-(A_{1})_{w}\overline{(A_{2})_{w}}\right]\nonumber \\
 & = & \frac{g^{2}}{4\sigma^{2}}\textrm{Im}\left[(A_{2},A_{1})_{w}+(A_{1})_{w}\overline{(A_{2})_{w}}\right]\label{eq:swv-kx}
\end{eqnarray}
Adding suitably scaled $\langle x_{1}k_{2}\rangle$ and $\langle k_{1}x_{2}\rangle$
gives only the imaginary part of the two-time sequential weak value
without the extra terms of individual measurements
\begin{eqnarray}
\textrm{Im}\left[(A_{2},A_{1})_{w}\right] & = & \frac{2\sigma^{2}}{g^{2}}\left(\langle x_{1}k_{2}\rangle+\langle k_{1}x_{2}\rangle\right)\label{eq:SWV-Im}
\end{eqnarray}
In practice, experimental measurement of two-time sequential weak values does not use
Eqs. \ref{eq:SWV-Re} or \ref{eq:SWV-Im}, but rather directly subtracts
the product of single-time weak values using Eqs. \ref{eq:swv-xx}, \ref{eq:swv-kk},
\ref{eq:swv-xk}, or \ref{eq:swv-kx}, as in \cite{Piacentini2016}.
The reason is that it is easier to measure pointer shifts proportional
to $g$ instead of $g^{2}$.

%%%

\bibliographystyle{apsrev4-1}
\bibliography{references}

\end{document}